\documentclass[aps,prx,reprint,nofootinbib,longbibliography]{revtex4-2}

\usepackage{amsmath,amssymb,bm,mathtools}
\usepackage{amsthm}
\usepackage{booktabs}
\usepackage{array}
\usepackage{hyperref}
\usepackage{CJKutf8}
\newcommand{\ChineseName}{%
  \begin{CJK}{UTF8}{gbsn}尹卫国\end{CJK}%
}

\newcommand{\ignore}[1]{}

\newtheorem{theorem}{Theorem}

\begin{document}

\title{A Human--AI Theorem Connecting Spontaneous and Field-Induced Mechanisms of Collective Behavior in One Dimension}
\author{Weiguo Yin (\ChineseName)}
\email{wyin@bnl.gov}
\affiliation{Condensed Matter Physics and Materials Science Division, Brookhaven National Laboratory, Upton, New York 11973, USA}

\date{\today}

\begin{abstract}
Can an artificial intelligence (AI) generate a scientific hypothesis outside a human collaborator's active hypothesis space (AHS), and can human--AI research be organized to make such breakthroughs more likely? We document such a case while proving a theorem that connects two basic organizing mechanisms of statistical physics: collective behavior arising in zero field from competing interactions and that induced or controlled by an external field. A zero-field $O(n)$-vector open chain with arbitrary inhomogeneous nearest- and next-nearest-neighbor interaction functions $U_i(\bm {S}_i\cdot{\bm S}_{i+1})$ and $V_i({\bm S}_i\cdot{\bm S}_{i+2})$ is microscopically, via a temperature-independent mapping at the Hamiltonian level, equivalent to a simpler $O(n)$ open chain with nearest-neighbor interaction $V_i(\boldsymbol\sigma_i\cdot\boldsymbol\sigma_{i+1})$ and axial single-spin potential $U_i(\sigma_i^z)$ for every integer $n\ge1$ and every system size $L\ge1$. The homogeneous linear specialization maps the foundational frustrated $J_1$-$J_2$ model onto the canonical $J$-$h$ field model---with $n=1,2,3$ being the Ising, XY, and Heisenberg classical spin models, respectively; the theorem resolved a longstanding challenge for $n=3$ published in 1990. Its proof was done with an AI-synthesized recursive Householder moving frame and understood via a
human-recognized hidden reciprocity. Its generalizations to arbitrary interaction functions and arbitrary inhomogeneity were recognized by the AI and the human, respectively.  
An analogous theorem holds when the continuous $O(n)$ spins are replaced by the $q$-state Potts spins, implying a closed-form exact solution of the $J_1$-$J_2$ standard Potts open chain for every $q\ge2$ and every $L\ge1$. The emergence of these theorems from a human–AI co-development framework suggests that sustained AI involvement throughout a systematic research program, coupled with the human’s growing understanding of the implications of AI-generated results, may incubate autonomous scientific breakthroughs and make aspects of the discovery process experimentally testable.
\end{abstract}

\maketitle


\section{Introduction\label{sec:intro}}

Two questions about scientific breakthroughs driven by human--AI collaboration motivate this work. The first concerns scientific understanding. As general-purpose reasoning AI moves from answering well-specified questions toward generating unexpected hypotheses and solutions~\cite{Yin_Potts_J1-J2_1D,Cheng_NatMater_26_AI_Materials,Okabe_NatMater_26_SCIGEN,Schoener_JMMM_26_Materials_Discovery,Gottweis_25_AI_Coscientist,Penades_25_AI_Bacteria,9d_OpenAIUnitDistance_26,9d_Sawin_26,9d_Alon_26}, the practical bottleneck may invert: instead of asking only whether an AI understands a human scientist's problem, the human scientist must also recognize and understand what the AI has produced~\cite{Krenn_NRP_22,Ceriotti_PRXI_26}. An AI response can mix hallucination, algebraic error, rediscovery, irrelevant information, useful connection, and genuinely new insight in the same channel. The scientist must distinguish among them and determine which implications survive derivation, falsification, comparison with prior work, and physical interpretation. Once the realistic results are identified and validated, can the human scientist fully understand their implications, especially those beyond the active hypothesis space (AHS)~\cite{Klahr_88_AHS}, effectively leading to next breakthroughs? 

The second question is methodological: how can human--AI research be organized systematically to increase the chance of scientific breakthroughs, such as solving longstanding puzzles or revealing unknown unknowns far beyond the human's AHS? This question complements recent efforts toward AI co-scientists~\cite{
Gottweis_25_AI_Coscientist}. Recently, we developed an AI-aided research program, in which four projects~\cite{Yin_Potts_J1-J2_1D,Yin_site,Yin_Potts_UNPC,yin_7_CMM_arXiv} were deliberately organized so that the AI's contribution categories spanning 1d--7d in terms of a nine-dan AI contribution framework~\cite{Yin_Potts_UNPC} could be examined as relatively clean cases. Once the designated reasoning was finished, AI did not remain embedded through every subsequent stage of the scientific project. That design was useful for attribution, but it effectively froze the AI role after the assigned contribution. Built upon these experiments, the present work was designed to take an opposite approach: AI remains involved from the beginning to the end of a research trajectory---through derivation, comparison with literature, generalization, criticism, falsification, connection, and manuscript development. This multidimensional setting allows a different possibility to be tested: can the AI's role itself escalate to 8d/9d as the shared scientific landscape accumulates?

Here, we show that addressing the two questions (Sec.~\ref{sec:organizing}) has led to remarkable new AI-led insights, far beyond the human's AHS, into longstanding problems in statistical mechanics and magnetic materials. The two kinds of physics models at the center of this history are themselves foundational. The first kind is the so-called $n$-vector spin model with nearest-neighbor (NN) exchange interaction $J$~\cite{Stanley_PR_69_vector} ($n=1,2,3$ are the classical Ising, XY, and Heisenberg models, respectively~\cite{Mattis_book_1981,Mattis_book_1985,Mattis_book_08_SMMS}), which is the canonical setting for studying collective behavior arising in zero field and magnetic response to an external magnetic field $h$, where exchange-driven collective correlations compete with field-driven polarization and determine magnetization and field-dependent thermodynamics. For spins forming a one-dimensional (1D) lattice, it is described by the $J$-$h$ Hamiltonian
\begin{equation}
\mathcal H_{J-h}
=-J\sum_i{\bm S}_i\cdot{\bm S}_{i+1}
-h\sum_i{\bm S}_i^z,
\label{eq:J-h}
\end{equation}
where ${\bm S}_i$ is a unit vector in the $n$-dimensional spin space at lattice site $i$. The \(q\)-state planar Potts model, also known as the clock model, is a discrete \(n=2\) model with \(\mathbb{Z}_q\) rotational symmetry, whose simplest case, \(q=2\), is the Ising model~\cite{Potts_1952}. In contrast, the standard \(q\)-state Potts model generalizes the Ising model (\(q=2\)) by allowing the full permutation symmetry \(\mathbb{S}_q\)~\cite{Potts_1952,Potts_RMP_82}. Hereafter, the unqualified term Potts model refers to the standard \(q\)-state Potts model. The second kind is the so-called $J_1$-$J_2$ spin model, which is a minimal archetype of frustrated magnetism: interactions at two competing ranges---NN and next-nearest-neighbor (NNN)---can favor incompatible local arrangements and generate nontrivial correlations, modulated or helical ground states, dimerization, and degeneracy across discrete and continuous spin settings~\cite{Kivelson_24_book_statistical,Ramirez_review_frustrated_magnets_94,2011_book_frustration}. For a 1D chain, the $J_1$-$J_2$ model is described by
\begin{equation}
\mathcal H_{J_1-J_2}
=-J_1\sum_i{\bm S}_i\cdot{\bm S}_{i+1}
-J_2\sum_i{\bm S}_i\cdot{\bm S}_{i+2}.
\label{eq:J1-J2}
\end{equation}
The corresponding transfer-matrix or transfer-integral problems for calculating the partition function can be formulated exactly~\cite{Kramers_Wannier_PR_41_transferMatrix} and evaluated numerically~\cite{Glumac_JPA_93_Potts_LR,Blume_PRB_75_1D_Classical_Heisenberg_field,Harada_ZPB_88_1D_J1-J2_Classical_Heisenberg}. Of particular interest for these textbook 1D systems is whether exact analytical results can be achieved to provide classroom-level deep understanding. Table~\ref{table:problems} ($J$-$h$ and $J_1$-$J_2$ lines) summarizes a representative status of exact closed-form solutions for these chain models, showing that most of them are longstanding unsolved problems. \begin{table}[b]
\caption{Representative status of exact closed-form solutions and $J_1$-$J_2$ $\leftrightarrow$ $J$-$h$ mapping for the spin chains. T-map and H-map stand for the thermodynamic-limit and Hamiltonian-level mappings,
respectively. X denotes no exact closed-form solution known to us. I--IV denote four types of discovery demonstrated in this work (see Table~\ref{tab:types}).}
\begin{tabular}{c|ccccc}
\hline\hline
Model&Ising&Potts&clock/XY&Heisenberg&$O(n)$\\
&($n=1$)&&($n=2$)&($n=3$)&\\
\hline
$J$ 
& 1925~\cite{Ising1925} 
& 1952~\cite{Potts_1952} 
& 1967~\cite{Joyce_PRL_67_1D_XY}
& 1964~\cite{Fisher_64_1D_Heisenberg_classical} 
&1969~\cite{Stanley_PR_69_vector} \\
$J$-$h$ 
& 1941~\cite{Kramers_Wannier_PR_41_transferMatrix} 
& 1994~\cite{Glumac_JPA_94_Potts_J1-h} 
& X\ignore{~\cite{Loveluck_JPC_79_1D_J-h_XY}} 
& X\ignore{~\cite{Blume_PRB_75_1D_Classical_Heisenberg_field}} & X\\
$J_1$-$J_2$ 
& 1969~\cite{Dobson_JMathP_69_Many-Neighbored-Ising-Chain} 
& 2025~\cite{Yin_Potts_J1-J2_1D} 
& X\ignore{~\cite{Harada_JPSJ_84_1D_J1-J2_XY}}
&
X\ignore{~\cite{Harada_ZPB_88_1D_J1-J2_Classical_Heisenberg,Harada_JPC_90_1D_J1-J2_Classical_Heisenberg_XY}} 
& X \\ \hline
T-map & 1969~\cite{Dobson_JMathP_69_Many-Neighbored-Ising-Chain} & 2025~\cite{Yin_Potts_J1-J2_1D} & \footnote{The clock result follows immediately by discrete restriction of the planar mapping (see Appendix~\ref{appendix:special_n2}).}1984~\cite{Harada_JPSJ_84_1D_J1-J2_XY} & 1988~\cite{Harada_JPhysiqueColl_88} & III $\to$ II\\
H-map & 1969~\cite{Dobson_JMathP_69_Many-Neighbored-Ising-Chain} & IV & \footnote{H-map for $n=2$ was declared in Ref.~\cite{Harada_JPC_90_1D_J1-J2_Classical_Heisenberg_XY} but its explicit derivation was not given (see Sec.~\ref{sec:reflection}).\label{footnote:XY}}1990\cite{Harada_JPC_90_1D_J1-J2_Classical_Heisenberg_XY}? & I $\to$ IV & II $\to$ IV\\
\hline\hline
\end{tabular}
\label{table:problems}
\end{table}
The possibility of extracting significant new physical insights from these models was revived by the 2025 AI-bootstrapped exact solution of the $J_1$-$J_2$ Potts chain for every $q\ge 2$ in the thermodynamic limit $L\to\infty$~\cite{Yin_Potts_J1-J2_1D}, where $L$ denotes the number of spins, i.e., the chain size.

The two kinds of models are normally treated as separate physical stories: One is about geometric frustration, the other about symmetry-breaking fields. In the transfer-matrix formulation, the range-two $J_1$-$J_2$ chain appears much more complicated than the range-one $J$-$h$ chain: 
For a $q$-state Potts or clock realization, the order of the transfer matrix of the $J_1$-$J_2$ chain is $q^2$, compared with $q$ for the $J$-$h$ chain~\cite{Yin_Potts_J1-J2_1D,Glumac_JPA_93_Potts_LR,Glumac_JPA_94_Potts_J1-h}; for continuous $O(n)$ spins, the transfer problems for the $J_1$-$J_2$ and $J$-$h$ chains are represented in pair-spin and single-spin configuration spaces, respectively~\cite{Blume_PRB_75_1D_Classical_Heisenberg_field,Harada_JPSJ_84_1D_J1-J2_XY,Harada_ZPB_88_1D_J1-J2_Classical_Heisenberg,Harada_JPhysiqueColl_88,Harada_JPC_90_1D_J1-J2_Classical_Heisenberg_XY}. However, an exact thermodynamic-limit free-energy mapping between the $J_1$-$J_2$ and $J$-$h$ chains has been established for the $n\le3$ and Potts cases,
\ignore{
by matching the partition functions (as functionals of the Hamiltonians),
\begin{equation}
\mathbb{Z}\left[\mathcal{H}_{J_1-J_2}\right]= c\;\mathbb{Z}[\mathcal{H}_{J-h}]\;\;\; \mathrm{for}\; L\to\infty,
\label{eq:T-map}
\end{equation}
where $J=J_2$, $h=J_1$ and $c$ is a constant,
} 
as shown in Table~\ref{table:problems} (T-map line). The implications of these results are remarkable: the exact 
bridge between the two canonical models not only reduces a seemingly much more complicated problem to a simpler one, but also connects two basic organizing mechanisms of statistical physics, i.e., collective behavior arising spontaneously 
from competing interactions and collective behavior induced or controlled by an external field.   

Moreover, a Hamiltonian-level mapping was found for \(n=1\) (Ising)~\cite{Dobson_JMathP_69_Many-Neighbored-Ising-Chain} and arguably \(n=2\)~\cite{Harada_JPC_90_1D_J1-J2_Classical_Heisenberg_XY} open chains, as shown in Table~\ref{table:problems} (H-map line). Unlike the thermodynamic mapping or an approximate surrogate, the Hamiltonian-level mapping establishes a temperature-independent configuration-level bijection that preserves both the Hamiltonian and equilibrium measure for every finite open-chain length \(L\), thereby carrying over the entire finite-size equilibrium energy landscape and all correspondingly transformed observables exactly; the thermodynamic free-energy mapping follows merely as its \(L\to\infty\) corollary.
For \(n=3\), Harada and Mikeska found an exact partition-function mapping for arbitrary finite \(L\) through an explicit comparison of the transfer-integral calculations;
however, they explicitly argued that this duality could not be derived at the Hamiltonian level because rotations in three dimensions do not commute~\cite{Harada_JPC_90_1D_J1-J2_Classical_Heisenberg_XY}.
Can we overcome this 1990 stated obstruction and derive a Hamiltonian-level mapping for \(n=3\) and, more generally, every \(n\ge1\)? If successful, what implications might this cleanest mapping carry?

Table~\ref{table:problems} (H-map line) also reveals a knowledge gap: a Hamiltonian-level mapping has not been published for the standard $q$-state Potts chain yet, to the best of our knowledge. 
Can we close this knowledge gap? The fact that the $J$-$h$ Potts open chain has been solved in closed form~\cite{Glumac_JPA_94_Potts_J1-h,ChangShrock2009} makes the question particularly interesting: if found, such a microscopic mapping would yield a closed-form exact solution to the $J_1$-$J_2$ Potts open chain for every $q\ge2$ and every $L\ge1$, a clear leap from the 2025 $L\to\infty$ result~\cite{Yin_Potts_J1-J2_1D}.

\begin{figure}[t]
    \begin{center}
\includegraphics[width=0.8\columnwidth,clip=true,angle=0]{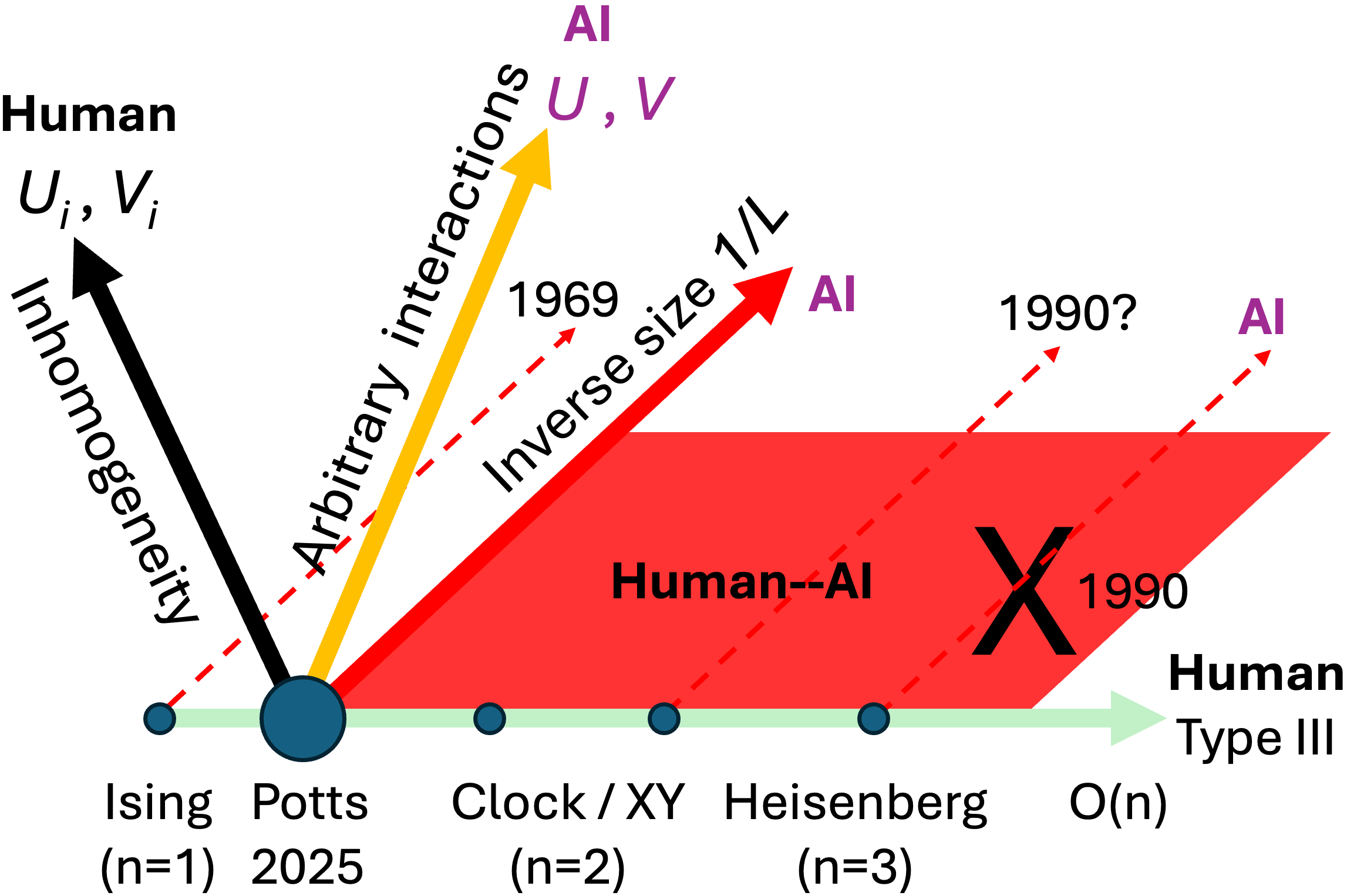}
    \end{center}
\caption{Illustration of a human--AI research developed from a single point (big blue dot~\cite{Yin_Potts_J1-J2_1D}) to a four-axis AHS: The green axis stands for model choice, red for inverse system size $1/L$, and orange for homogeneous arbitrary interactions $U(x),V(x)$, and black for inhomogeneous $U_i(x), V_i(x)$. The four AHS axes are organizational rather than four scalar coordinates (see text).
The main driver of  different subspaces is indicated as Human or AI. Human--AI means that they drove the $n-L$ subspace together. The initial AHS is represented by the green arrow and marked as Type III. The 1969 and 1990 red dashed lines stand for the Hamiltonian-level mapping for the Ising model~\cite{Dobson_JMathP_69_Many-Neighbored-Ising-Chain} and that declared for the XY model~\cite{Harada_JPC_90_1D_J1-J2_Classical_Heisenberg_XY} (see Table~\ref{table:problems}, footnote~\ref{footnote:XY}), respectively. X denotes the 1990 obstruction~\cite{Harada_JPC_90_1D_J1-J2_Classical_Heisenberg_XY}. 
}
\label{fig:AHS}
\end{figure}

In this article, we present a human--AI collaboration organized into a co-development framework, in which the human-designed initial AHS for the AI's progressive development consisted of the thermodynamic free-energy mapping problems between the $J_1$-$J_2$ and $J$-$h$ chains, with challenge progression through one axis (Fig.~\ref{fig:AHS}, green axis): the $q$-state Potts, $q$-state clock, $n=2$, $n=3$, and $O(n)$ models. 
The AI participation then continued instead of terminating after a successful reasoning. 
The result is a theorem for a four-axis AHS (Fig.~\ref{fig:AHS}) far beyond the initial one-axis AHS: an exact Hamiltonian-level mapping between the range-two and range-one open chains with arbitrary inhomogeneous interaction functions---not merely the homogeneous $J_1$-$J_2$ vs $J$-$h$ linear interactions---for every $n\ge1$ or every $q\ge2$, and every $L\ge1$. 
The four AHS axes should not be interpreted as four ordinary scalar coordinates. Within the \(O(n)\) family, model choice forms a 1D discrete axis indexed by \(n\), whereas the Potts and clock families introduce separate \(q\)-parameterized branches. The inverse-size axis is 1D, while the arbitrary-interaction axis is already an infinite-dimensional function space; allowing independent variation promotes it to an \(O(L)\)-fold product of such spaces. Thus the four-axis AHS is organizationally compact but mathematically vast, making it especially suitable for further AI-assisted exploration. This work exhibited two 9d AI-led episodes under the definitions of Ref.~\cite{Yin_Potts_UNPC}. 

Yet, the final extension to arbitrary inhomogeneity was discovered by the human without modifying the proof. This not only highlights the critical role of human recognition and understanding of the implications of AI-generated results, but also brings a much broader class of experimentally relevant inhomogeneous geometries within the scope of the theorem (Sec.~\ref{sec:experiment}). 

The rest of this article is organized as follows: Sec.~\ref{sec:theorem} presents the Hamiltonian-level theorem for the $O(n)$-vector-spin open chains, its proof done with an AI-synthesized recursive Householder moving frame and understood via a human-recognized hidden reciprocity obtained from historical context analysis, and its implications including  the linear corollary, a physical nonlinear example (single-ion anisotropy $\longleftrightarrow$ biquadratic exchange), and broader experimental and theoretical relevance. Sec.~\ref{sec:potts} presents an analogous theorem for the $q$-state Potts open chains, its proof, and an exact closed-form solution to the $J_1$-$J_2$ standard Potts chain for every $q\ge 2$ and $L\ge 1$. Sec.~\ref{sec:discussion} addresses how this work was prepared and organized into a human--AI co-development framework (including categorizations of four discovery types, nine-dan AI contribution, and being vs non-being prompting strategies), 
the research trajectory based on the timestamped human--AI conversation transcript~\cite{data:9d}, 
comparison with other AI discovery modes, and limitations and testable hypotheses for future AI research.

\section{Theorem,  Proof, and Implication\label{sec:theorem}}

Consider an open chain of $L$ classical unit vectors
\begin{equation}
 {\bm S}_i\in S^{n-1},\qquad n\ge 1,
\end{equation}
with the range-two Hamiltonian
\begin{equation}
\mathcal H^{\rm OBC}_{\rm range\,2}(L)
=\sum_{i=1}^{L-1}
U_i(\bm S_i\cdot\bm S_{i+1})
+\sum_{i=1}^{L-2}V_i(\bm S_i\cdot\bm S_{i+2}),
\label{eq:range2}
\end{equation}
where $U_i(x)$ and $V_i(x)$ are arbitrary real functions on $x\in [-1,1]$ and may vary arbitrarily with the site index $i$, allowing the interactions to be independently inhomogeneous along the chain. OBC means the open boundary condition. Empty sums are understood to vanish.

The corresponding range-one open chain with another set of $L-1$ classical unit vectors
${\boldsymbol\sigma}_i\in S^{n-1}$ is
\begin{equation}
\mathcal H^{\rm OBC}_{\rm range\,1}(L-1)
=\sum_{i=1}^{L-1}U_i(\hat{\bm z}\cdot\boldsymbol\sigma_i)+\sum_{i=1}^{L-2}
V_i(\boldsymbol\sigma_i\cdot\boldsymbol\sigma_{i+1}),
\label{eq:range1}
\end{equation}
where \(\hat{\bm z}\in S^{n-1}\) is an arbitrary fixed reference unit vector.
The NN interaction $U_i$ of the range-two model becomes an axial single-spin potential, whereas the NNN interaction $V_i$ becomes the NN interaction. 

\begin{theorem} 
For every positive integer $n$ and every chain size $L$, there exists a bijection,
\begin{equation}
\{{\bm S}_1,{\bm S}_2,{\bm S}_3,\dots,{\bm S}_L\}\longleftrightarrow \{{\bm S}_1,\boldsymbol\sigma_1,\boldsymbol\sigma_2,\dots,\boldsymbol\sigma_{L-1}\},   
\label{eq:bijection}
\end{equation}
satisfying
\begin{eqnarray}
   \bm S_i\cdot\bm S_{i+1}
 &=&\hat{\bm z}\cdot\boldsymbol\sigma_i,\label{eq:geom_nn_identity}\\
 \bm S_i\cdot\bm S_{i+2}
&=&\boldsymbol\sigma_i\cdot\boldsymbol\sigma_{i+1}.
\label{eq:geom_nnn_identity}  
\end{eqnarray}
Then, for arbitrary inhomogeneous functions $U_i(x),V_i(x)$ on $x\in[-1,1]$, 
Eqs.~\eqref{eq:range2} and \eqref{eq:range1} can be mapped exactly onto each other under the change of variables, 
\begin{equation}
 \mathcal H_{\rm range\, 2}^{\rm OBC}(L)
 ={\mathcal H}_{\rm range\,1}^{\rm OBC}(L-1).
 \label{eq:geom_H_identity}
\end{equation}
The Hamiltonian mapping is temperature independent, and at every finite temperature for which the corresponding Boltzmann integrals exist, the partition functions obey the exact identity
\begin{equation}
 Z_{\mathrm{range}\,2}^{\rm OBC}(L)
 =Z_{\mathrm{range}\,2}^{\rm OBC}(1)\cdot Z_{\mathrm{range}\,1}^{\rm OBC}(L-1),
 \label{eq:OBC_partition_identity}
\end{equation}
with $Z_{\mathrm{range}\,1}^{\rm OBC}(0)=1$ and the constant $Z_{\mathrm{range}\,2}^{\rm OBC}(1)$ being the partition function of a single spin ($\bm S_1$) in zero field.
\label{thm:UiVi}
\end{theorem}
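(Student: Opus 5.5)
The plan is to build the bijection from a recursive moving frame. At each site $i$ I attach an orthogonal matrix $R_i\in O(n)$ with $R_i\hat{\bm z}=\bm S_i$, and I define the new spins by
\begin{equation*}
\boldsymbol\sigma_i=R_i^{T}\bm S_{i+1},\qquad i=1,\dots,L-1 .
\end{equation*}
This choice gives Eq.~\eqref{eq:geom_nn_identity} immediately, since $\hat{\bm z}\cdot\boldsymbol\sigma_i=(R_i\hat{\bm z})\cdot\bm S_{i+1}=\bm S_i\cdot\bm S_{i+1}$.

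The real content is choosing the frames so that Eq.~\eqref{eq:geom_nnn_identity} also holds. Expanding, $\boldsymbol\sigma_i\cdot\boldsymbol\sigma_{i+1}=\bm S_{i+1}\cdot R_iR_{i+1}^{T}\bm S_{i+2}$. This is where the 1990 objection about noncommuting rotations bites: an arbitrary frame choice leaves an uncontrolled relative rotation $R_iR_{i+1}^T$.

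I would instead define the frames recursively:
\begin{equation*}
R_{i+1}=H_iR_i ,
\end{equation*}
where $H_i$ is the Householder reflection that exchanges $\bm S_i$ and $\bm S_{i+1}$. Concretely, $H_i$ reflects across the hyperplane orthogonal to $\bm S_i-\bm S_{i+1}$, and $H_i=\mathbb 1$ if $\bm S_i=\bm S_{i+1}$. This choice has three consequences:
\begin{itemize}
\item $R_{i+1}\hat{\bm z}=H_i\bm S_i=\bm S_{i+1}$, so the frame condition propagates.
\item $R_iR_{i+1}^T=H_i^T=H_i$, because $H_i$ is a symmetric involution.
\item Therefore $\boldsymbol\sigma_i\cdot\boldsymbol\sigma_{i+1}=(H_i\bm S_{i+1})\cdot\bm S_{i+2}=\bm S_i\cdot\bm S_{i+2}$.
\end{itemize}
The seed $R_1$ is any fixed rule depending only on $\bm S_1$, for instance the Householder reflection sending $\hat{\bm z}$ to $\bm S_1$. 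For $n=1$ everything reduces to $O(1)=\{\pm1\}$ and the same formulas apply. Substituting both identities into Eq.~\eqref{eq:range2} term by term then gives Eq.~\eqref{eq:geom_H_identity} for arbitrary, site-dependent $U_i$ and $V_i$. The inhomogeneity is harmless because each identity holds bond by bond.

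Next I would show the map is a bijection by writing its inverse recursively. Given $\bm S_1$ and the $\boldsymbol\sigma$'s, the inverse step is: knowing $R_i$ (and hence $\bm S_i$), set $\bm S_{i+1}=R_i\boldsymbol\sigma_i$, then form $H_i$ from $\bm S_i,\bm S_{i+1}$ and set $R_{i+1}=H_iR_i$. Both directions are explicit, so they are mutually inverse.

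For the partition identity I would use measure preservation. At each step, with $R_i$ held fixed, the map $\bm S_{i+1}\mapsto\boldsymbol\sigma_i$ is an orthogonal map and so preserves the uniform measure on $S^{n-1}$. I would then apply Fubini iteratively, integrating $\bm S_L,\bm S_{L-1},\dots$ in turn, each time with the earlier spins fixed. The Jacobian is then triangular with unit-modulus factors, so the product measure $d\bm S_1\cdots d\bm S_L$ maps to $d\bm S_1\,d\boldsymbol\sigma_1\cdots d\boldsymbol\sigma_{L-1}$. Since the transformed Hamiltonian does not involve $\bm S_1$, the $\bm S_1$ integral factors out as $Z^{\rm OBC}_{\rm range\,2}(1)$. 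For the Potts-free continuous case this yields Eq.~\eqref{eq:OBC_partition_identity}, with the convention $Z_{\mathrm{range}\,1}^{\rm OBC}(0)=1$ covering $L=1$.

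I expect the main obstacle to be the frame choice that makes $R_iR_{i+1}^T$ act as an exchange of neighboring spins, which is exactly what the Householder recursion supplies. A secondary nuisance is measurability and measure preservation of the recursive map. $H_i$ is discontinuous on the null set $\bm S_i=\bm S_{i+1}$, and this has to be handled: there the map is still a bijection, and the discontinuity set has measure zero, so it does not affect the Boltzmann integrals.
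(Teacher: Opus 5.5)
Your proof is correct and is the paper's construction. Your global reflection $H_i$, which swaps $\bm S_i$ and $\bm S_{i+1}$, equals $\mathsf G_i\mathsf H(\boldsymbol\sigma_i)\mathsf G_i^{\mathsf T}$, so $R_{i+1}=H_iR_i=\mathsf G_i\mathsf H(\boldsymbol\sigma_i)$ reproduces the paper's recursive Householder moving frame with the same seed, and your fiberwise-orthogonal, iterated-Fubini measure argument matches Eqs.~\eqref{eq:measure}--\eqref{eq:measure_factorization}. Writing the update in global coordinates also makes the paper's reciprocity identity $R_{i+1}^{\mathsf T}\bm S_i=R_i^{\mathsf T}H_i\bm S_i=R_i^{\mathsf T}\bm S_{i+1}$ immediate.
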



\subsection{Proof: A recursive Householder moving frame\label{sec:householder}}
Householder reflections are standard symmetric orthogonal transformations in numerical linear algebra~\cite{Householder1958,GolubVanLoan2013}, while moving-frame constructions have a long geometric history~\cite{FelsOlver1999}.  Hyperplane reflections also appear as symmetry operations in Monte Carlo algorithms for classical $O(n)$ spins~\cite{Wolff1989,KentDobiasSethna2018}.  We use the phrase \emph{recursive Householder moving frame} descriptively for the construction below: the frame is propagated site by site through a product of local Householder reflections.  To our knowledge, the recursive use of Householder reflections as an exact change of spin variables that reduces a 1D range-two Hamiltonian to a range-one Hamiltonian has not previously been formulated.

First, fix once and for all a reference unit vector \(\hat{\bm e}
\in S^{n-1}\), which we call the starting \(z\)-axis. 
Then, for every $\bm x\in S^{n-1}$ define a Householder reflector as the $n\times n$ symmetric orthogonal matrix $\mathsf H(\bm x)\in O(n)$ that sends $\hat{\bm e}$ to $\bm x$,
\begin{equation}
 \mathsf H(\bm x)=
 \begin{cases}
 \displaystyle
 \mathsf I-
 \frac{(\hat{\bm e}-\bm x)(\hat{\bm e}-\bm x)^{\mathsf T}}
 {1-\hat{\bm e}\cdot\bm x},
 & \bm x\neq \hat{\bm e},\\[3mm]
 \mathsf I,&\bm x=\hat{\bm e}.
 \end{cases}
 \label{eq:householder_section}
\end{equation}
Since $\|\hat{\bm e}-\bm x\|^2=2(1-\hat{\bm e}\cdot\bm x)$, Eq.~\eqref{eq:householder_section} is the standard Householder form $\mathsf I-2\bm v\bm v^{\mathsf T}/(\bm v^{\mathsf T}\bm v)$ with $\bm v=\hat{\bm e}-\bm x$.  It obeys
\begin{equation}
 \mathsf H(\bm x)^{\mathsf T}=\mathsf H(\bm x),\;\;\;
 \mathsf H(\bm x)^{\mathsf T}\mathsf H(\bm x)=\mathsf I,\;\;\;
 \mathsf H(\bm x)\hat{\bm e}=\bm x.
 \label{eq:householder_properties}
\end{equation}
The special definition $\mathsf H(\hat{\bm e})=\mathsf I$ is sufficient and it is the conventional setting for algorithmic efficiency; no globally smooth choice of frame is required for the change of variables or the partition-function identity. Other choices satisfying Eq.~(\ref{eq:householder_properties}) at $\bm x=\hat{\bm e}$, e.g., a reflection across a hyperplane containing $\hat{\bm e}$, may allow a continuous directional approach with $\mathsf{det\; H}(\hat{\bm e})=-1$ (Appendix~\ref{appendix:special_n2}). 

The AI used Eq.~(\ref{eq:householder_properties}) together with a recursive moving frame to construct the bijection, Eqs.~(\ref{eq:bijection})--(\ref{eq:geom_nnn_identity}). Choose
\begin{equation}
 \mathsf G_1=\mathsf H(\bm S_1),
 \qquad \mathsf G_1\hat{\bm e}=\bm S_1,
\end{equation}
and define recursively, for $i=1,\ldots,L-1$,
\begin{equation}
 \boldsymbol\sigma_i=\mathsf G_i^{\mathsf T}\bm S_{i+1},
 \qquad
 \mathsf G_{i+1}=\mathsf G_i\mathsf H(\boldsymbol\sigma_i).
 \label{eq:recursive_householder}
\end{equation}
Explicitly, $$\mathsf G_{i}=\mathsf H(\bm S_1)\mathsf H(\boldsymbol\sigma_1)H(\boldsymbol\sigma_2)\cdots\mathsf H(\boldsymbol\sigma_{i-1}).$$ 
Because $\mathsf G_i^{\mathsf T}\mathsf G_i=\mathsf G_i\mathsf G_i^{\mathsf T} =\mathsf I$, $\mathsf G_i$ is orthogonal.  Thus, $\boldsymbol\sigma_i^{\mathsf T}\boldsymbol\sigma_i=\bm S_{i+1}^{\mathsf T}\bm S_{i+1}=1$; each $\boldsymbol\sigma_i$ is again a unit $n$-vector.  Moreover,
\begin{equation}
 \mathsf G_{i+1}\hat{\bm e}
 =\mathsf G_i\mathsf H(\boldsymbol\sigma_i)\hat{\bm e}
 =\mathsf G_i\boldsymbol\sigma_i
 =\bm S_{i+1},
\end{equation}
so induction gives
\begin{equation}
 \bm S_i=\mathsf G_i\hat{\bm e}
 \qquad (i=1,\ldots,L).
 \label{eq:frame_tracks_spin}
\end{equation}
Geometrically, the recursive moving frame rotates the coordinate system so that the preceding spin becomes the local $z$-axis, and then describes the next spin in that frame. Specifically, at step \(i\), \(\bm S_i\) defines the local $z$-axis and
\(\boldsymbol\sigma_i=\mathsf G_i^{\mathsf T}\bm S_{i+1}\)
describes \(\bm S_{i+1}\) in this local frame. The Householder update then constructs the next frame so that \(\bm S_{i+1}\) becomes its local $z$-axis.

The two identities, Eqs.~\eqref{eq:geom_nn_identity} and \eqref{eq:geom_nnn_identity}, responsible for the Hamiltonian-level mapping are now immediate.  First,
\begin{align}
 \bm S_i\cdot\bm S_{i+1}
 &=\hat{\bm e}^{\mathsf T}\mathsf G_i^{\mathsf T}\,\mathsf G_i
  \mathsf H(\boldsymbol\sigma_i)\hat{\bm e}
  =\hat{\bm e}^{\mathsf T}
  \mathsf H(\boldsymbol\sigma_i)\hat{\bm e}
 =\hat{\bm e}\cdot\boldsymbol\sigma_i.
 \label{eq:geom_nn_identity2}  
\end{align}
Second, using
$\bm S_{i+2}=\mathsf G_i\mathsf H(\boldsymbol\sigma_i)
\mathsf H(\boldsymbol\sigma_{i+1})\hat{\bm e}$,
\begin{align}
 \bm S_i\cdot\bm S_{i+2}
 &=\hat{\bm e}^{\mathsf T}\mathsf G_i^{\mathsf T}\,\mathsf G_i
 \mathsf H(\boldsymbol\sigma_i)
 \mathsf H(\boldsymbol\sigma_{i+1})\hat{\bm e}
 \nonumber\\
 &=\bigl[\mathsf H(\boldsymbol\sigma_i)\hat{\bm e}\bigr]^{\mathsf T}
   \bigl[\mathsf H(\boldsymbol\sigma_{i+1})\hat{\bm e}\bigr]
 \nonumber\\
 &=\boldsymbol\sigma_i\cdot\boldsymbol\sigma_{i+1}.
 \label{eq:geom_nnn_identity2}  
\end{align}
Importantly, the second identity uses the symmetry
$\mathsf H^{\mathsf T}=\mathsf H$, not commutativity between successive transformations.  Hence the Householder construction does not require three-dimensional rotations, or their higher-dimensional counterparts, to commute.

The transformation is invertible for an open chain: given $\bm S_1$ and $\{\boldsymbol\sigma_i\}_{i=1}^{L-1}$, Eqs.~\eqref{eq:recursive_householder} and \eqref{eq:frame_tracks_spin} reconstruct all original spins uniquely, thereby giving the bijection, Eqs.~(\ref{eq:bijection})--(\ref{eq:geom_nnn_identity}).

Renaming $\hat{\bm e}=\hat{\bm z}$ and substituting Eqs.~\eqref{eq:geom_nn_identity} and \eqref{eq:geom_nnn_identity} into Hamiltonian~(\ref{eq:range2}) derives Hamiltonian~(\ref{eq:range1}), because the proof acts on the geometric arguments of the interactions, not on the interactions themselves. This gives the Hamiltonian identity, Eq.~(\ref{eq:geom_H_identity}).
Thus a zero-field range-two $O(n)$ chain of $L$ spins is mapped exactly onto a simpler range-one $O(n)$ chain of $L-1$ spins with an axial single-spin potential. 
 
Importantly, the same change of variables also preserves the open-chain measure microscopically.  Let $d\mu_n$ denote the normalized $O(n)$-invariant measure on $S^{n-1}$.  For fixed preceding variables, $\bm S_{i+1}=\mathsf G_i\boldsymbol\sigma_i$ is only an orthogonal transformation, and therefore
\begin{equation}
 d\mu_n(\bm S_{i+1})=d\mu_n(\boldsymbol\sigma_i).
 \label{eq:measure}
\end{equation}
Consequently, performing the change of variables successively along the chain yields
\begin{equation}
 \prod_{i=1}^{L}d\mu_n(\bm S_i)
 =d\mu_n(\bm S_1)
 \prod_{i=1}^{L-1}d\mu_n(\boldsymbol\sigma_i).
 \label{eq:measure_factorization}
\end{equation}
The partition function
\begin{eqnarray}
Z_{\rm range\,2}^{\rm OBC}(L)
&=&
\int \prod_{i=1}^{L}d\mu_n(\bm S_i)
e^{-\beta \mathcal H_{\rm range\,2}^{\rm OBC}(L)
} \nonumber\\
&=&\int d\mu_n(\bm S_1)\left[\int
 \prod_{i=1}^{L-1} d\mu_n(\boldsymbol\sigma_i)
e^{-\beta \mathcal H_{\rm range\,1}^{\rm OBC}(L-1)}\right]\nonumber\\
&=&Z_{\mathrm{range}\,2}^{\rm OBC}(1)\cdot Z_{\rm range\,1}^{\rm OBC}(L-1),
\label{eq:J1J2-n-OBC-Z}
\end{eqnarray}
where $\beta=1/(k_\mathrm{B}T)$ with $k_\mathrm{B}$ being the Boltzmann constant and $T$ the absolute temperature.
Since $\mathcal H_{\rm range\,1}^{\rm OBC}(L-1)$ is independent of $\bm S_1$, the integral with respect to $\mu_n(\bm S_1)$ yields a factor that is nothing but $Z_{\mathrm{range}\,2}^{\rm OBC}(1)$, which is unity if normalized $d\mu_n(\bm S_1)$ is used.
If instead the unnormalized surface measure $d\Omega$ is used, $Z_{\mathrm{range}\,2}^{\rm OBC}(1)$ is the global-orientation factor
$\Omega_{n-1}=\frac{2\pi^{n/2}}{\Gamma(n/2)}$. Note 
$\Omega_{n-1}=2,2\pi, 4\pi$ for $n=1,2,3$, respectively. 
\hfill$\blacksquare$ 

For periodic boundaries, the local Hamiltonian identities remain valid, but the
transformed spins are no longer independent.  Closing the original chain by
$\bm S_{L+1}=\bm S_1$ imposes the global holonomy constraint
\begin{equation}
 \mathsf H(\boldsymbol\sigma_1)
 \mathsf H(\boldsymbol\sigma_2)\cdots
 \mathsf H(\boldsymbol\sigma_L)\hat{\bm e}
 =\hat{\bm e}.
 \label{eq:periodic_holonomy}
\end{equation}
Thus, the periodic problem is mapped exactly only onto the constrained subset of
dual-spin configurations satisfying Eq.~\eqref{eq:periodic_holonomy}, rather than
onto an unconstrained periodic $J$-$h$ chain.  For $n=1$, the constraint reduces to
$\prod_i \sigma_i=1$~\cite{Mueller_NPB_17_PBC_Ising}, while for $n=2$ it becomes
the bond-angle closure condition
$\sum_i(\Theta_{i+1}-\Theta_i)=0\pmod{2\pi}$,
equivalently an integer winding number around the ring~\cite{Cosco_JMP_21_Periodic_XY}.
Accordingly, the unconstrained finite-size periodic partition functions 
need not be equal, even though the local Hamiltonian mapping is exact.  This distinction is immaterial for
the bulk thermodynamics: for bounded finite-range interactions, changing between open and periodic boundaries modifies the total free energy only by a boundary contribution, so the two boundary conditions have the same thermodynamic-limit free-energy density~\cite{Dobson_JMathP_69_Many-Neighbored-Ising-Chain}.  Therefore Eq.~\eqref{eq:OBC_partition_identity} immediately implies the thermodynamic-limit (\(L\to\infty\)) free-energy mapping. 

\subsection{Low-dimensional cases and the hidden reflection structure\label{sec:reflection}}

The low-dimensional special cases sharpen both the historical context---within the homogeneous linear specialization to the $J_1$-$J_2$ and $J$-$h$ chains---and the geometric content of the present construction. Explicit reductions of the recursive Householder moving frame for $n=1,2,3$ are given in Appendix~\ref{appendix:n123}. Here we focus on understanding the AI-generated results: In hindsight, can the historical context help us identify an insightful feature of the recursive Householder moving frame rather than accepting it merely as a technical construction?
This hindsight is highly important because it provides an encouraging case that the AI-generated breakthrough is not only understandable but also likely retrospectively achievable by the human.

\subsubsection{Historical context}

For $n=1$, $S^0=\{\pm1\}$. With $\hat{\bm e}=1$, the $1\times1$ Householder matrices reduce to the scalars: $\mathsf H(\sigma_i)=\sigma_i$ and $\mathsf G_i=S_i$.  Equation~\eqref{eq:recursive_householder} then gives 
\begin{equation}
 \sigma_i=S_iS_{i+1},  \label{eq:icu_n1} 
\end{equation} 
while Eq.~\eqref{eq:geom_nnn_identity2} reduces to Dobson's Ising bond identity $\sigma_i\sigma_{i+1}=S_iS_{i+2}$~\cite{Dobson_JMathP_69_Many-Neighbored-Ising-Chain}.  Thus, the apparently special algebraic property $S_i^2=1$ is the $O(1)$ endpoint of the general geometric construction.  Its reflection content is, however, almost invisible: in one dimension an orthogonal reflection is merely multiplication by $-1$, so the geometry was naturally perceived only as a sign reversal.  

For $n=2$, Harada introduced relative planar angles for a periodic chain and obtained the corresponding transfer-integral formulation~\cite{Harada_JPSJ_84_1D_J1-J2_XY}; because of the periodic closure constraint Eq.~(\ref{eq:periodic_holonomy}), this does not constitute an exact finite-size Hamiltonian mapping.  Harada and Mikeska later treated open chains and established the exact finite-size partition-function equivalence~\cite{Harada_JPC_90_1D_J1-J2_Classical_Heisenberg_XY}.
\ignore{,  
\begin{equation*}
\mathbb{Z}\left[\mathcal{H}_{J_1-J_2}^{\rm OBC}
 (L)\right]=2\pi\,\mathbb{Z}[\mathcal{H}_{J-h}^{\rm OBC}
 (L-1)]
\label{eq:T-map-OBC}
\end{equation*}
with $J=J_2$ and $h=J_1$.}
Although they subsequently referred to the planar duality as derivable at the Hamiltonian level,
\ignore{
\begin{equation*}
 \mathcal H_{J_1-J_2}^{\rm OBC}
 (L)
 ={\mathcal H}_{J-h}^{\rm OBC}
 (L-1),
 \label{eq:H-map}
\end{equation*}}
neither of these papers displays an explicit derivation of the  Hamiltonian-level mapping.  In such a mapping, the operation $\phi\mapsto-\phi$ was naturally read in angular language as reversing the sense of rotation, although in Cartesian coordinates it is exactly reflection about the $x$ axis.  Thus, in both the $n=1$ and $n=2$ cases, the reflection structure is present but geometrically inconspicuous.

For $n=3$, Harada and Mikeska established the same duality for finite open chains at the transfer-integral level~\cite{Harada_JPC_90_1D_J1-J2_Classical_Heisenberg_XY} by augmenting the $n=2$ bond-angle picture with signed dihedral angles and accumulated azimuths as follows~\cite{Harada_ZPB_88_1D_J1-J2_Classical_Heisenberg,Harada_JPhysiqueColl_88}: Define the bond angle $\vartheta_i$ by 
\begin{equation}
\cos\vartheta_i=\bm S_i\cdot\bm S_{i+1},\label{eq:XY_sigmaz}
\end{equation}
and let $\varphi_i$ be the signed dihedral angle between the planes $(\bm S_{i},\bm S_{i+1})$ and $(\bm S_{i+1},\bm S_{i+2})$. Then to evaluate $\bm S_i\cdot\bm S_{i+2}$, take $\bm S_{i+1}$ as the local $z$ axis,
\begin{equation}
\bm S_i\cdot\bm S_{i+2}
=\cos\vartheta_i\cos\vartheta_{i+1}
+\sin\vartheta_i\sin\vartheta_{i+1}\cos\varphi_i.
\label{eq:torsion}
\end{equation}
Now introduce azimuth angles $\alpha_i$ recursively by \begin{equation}
    \alpha_{i+1}-\alpha_i=\varphi_i,
\end{equation} 
with arbitrary $\alpha_1$ (for example, $\alpha_1=0$)
and construct an auxiliary Heisenberg spin
\begin{equation}
\boldsymbol\sigma_i=(\sin\vartheta_i\cos\alpha_i,
\sin\vartheta_i\sin\alpha_i,\cos\vartheta_i).\label{eq:XY_sigma}
\end{equation}
Then
\begin{equation}
\sigma_i^z=\bm S_i\cdot\bm S_{i+1},\qquad
\boldsymbol\sigma_i\cdot\boldsymbol\sigma_{i+1}=\bm S_i\cdot\bm S_{i+2}.
\label{eq:Heisgeom}
\end{equation}
However, Harada and Mikeska explicitly argued that a Hamiltonian-level derivation was obstructed by the noncommutativity of three-dimensional rotations: 
\begin{quote}
``Owing to the non-commutativity of rotations in three dimensions this duality, in contrast to the corresponding property of the planar model, cannot be derived on the level of the Hamiltonian but only on the basis of an explicit comparison of the transfer integral calculations for the two models.''~\cite{Harada_JPC_90_1D_J1-J2_Classical_Heisenberg_XY}  
\end{quote}
The wording is revealing: (1) A Hamiltonian-level proof requires more than microscopic identities between interaction variables; it must establish an explicit change of variables and microscopic preservation of the measure, Eqs.~(\ref{eq:measure}) and (\ref{eq:measure_factorization}), with every required step justified mathematically rather than descriptively. The measure-preservation requirement was not established there; as presented, the bond-angle--dihedral-angle construction describes the transformation of local coordinate systems across sites without providing such a global change of variables together with its measure.
(2)  
Their stated obstruction was formulated entirely in terms of rotations, indicating that the $n=2$ planar construction was being viewed within a rotation-based geometric picture rather than through the reflection hidden in $\phi\mapsto-\phi$.  
(3)
The \(n=3\) bond-angle--dihedral-angle approach, as presented in Ref.~\cite{Harada_ZPB_88_1D_J1-J2_Classical_Heisenberg} and recently in Ref.~\cite{Dmitriev_PRB_19_delta_chain}, does not expose reflection as the organizing geometric operation, either. 
Three dimensions are special, too: two successive bond planes have a single signed dihedral angle, and one can encode that angle as the difference of two azimuths. For \(n>3\), there is no analogous single torsion angle that parametrizes the full relative orientation of the relevant transverse spaces. So 
the approach simply stops being obviously generalizable.
This interpretation is retrospective, but it explains why the known $n=1,2,3$ cases offered little guidance toward higher dimensions or toward an explicit unified Hamiltonian-level construction even for $n=2,3$.  

\subsubsection{Reciprocity as a hidden essence of the Householder-reflector construction\label{sec:reciprocity}}

To get insight into the AI's Householder construction, the human reexamined the bond-angle--dihedral-angle approach---now mathematically: The previous description of ``to evaluate $\bm S_i\cdot\bm S_{i+2}$, take $\bm S_{i+1}$ as the local $z$ axis'' can now be written as
\begin{eqnarray}
    \bm S_i\cdot\bm S_{i+2}&=&(\mathsf G_{i+1}^{\mathsf T} \bm S_{i})\cdot(\mathsf G_{i+1}^{\mathsf T} \bm S_{i+2})\label{eq:refresh}
\end{eqnarray}
using the $(i+1)$th frame $\mathsf G_{i+1}$. This equation holds if $\mathsf G_{i+1}\mathsf G_{i+1}^{\mathsf T}=\mathsf I$, i.e., $\mathsf G_{i+1}$ is orthogonal. Compared with the target identity $\bm S_i\cdot\bm S_{i+2}=\boldsymbol\sigma_i\cdot\boldsymbol\sigma_{i+1}$, Eq.~(\ref{eq:refresh}) suggests $\mathsf G_{i+1}^{\mathsf T} \bm S_{i}=\boldsymbol\sigma_i$ and $\mathsf G_{i+1}^{\mathsf T} \bm S_{i+2}=\boldsymbol\sigma_{i+1}$, which implies $\mathsf G_{i}^{\mathsf T} \bm S_{i+1}=\boldsymbol\sigma_{i}$ under a one-site translation. The human therefore identified a reciprocity encoded in 
\begin{eqnarray}
   \boxed{\mathsf G_{i+1}^{\mathsf T} \bm S_{i}=\mathsf G_{i}^{\mathsf T}\bm S_{i+1}}=\boldsymbol\sigma_i,
   \label{eq:reciprocity}
\end{eqnarray}
which means the local representation of $\bm S_{i}$ in the coordinate frame of $\bm S_{i+1}$ is identical to that of $\bm S_{i+1}$ in the frame of $\bm S_{i}$. 

By contrast, if the update from \(\mathsf G_i\) to \(\mathsf G_{i+1}\) were constructed purely by a proper rotation taking the new spin to the local \(z\)-axis, then geometrically the relative displacement seen from the opposite endpoint is reversed. 
Equation~(\ref{eq:reciprocity}) requires a reciprocal moving frame, and the required reciprocity is naturally implemented by an orientation reversal, thereby providing a simple explanation of why the Householder reflection is structural to the Hamiltonian-level mapping.

\subsection{Linear corollary: $J_1$-$J_2\leftrightarrow J$-$h$ for $n$-vector spins\label{sec:bilinear}}

Choosing the uniform linear interaction functions
\begin{equation}
U_i(x)=-J_1x,\qquad V_i(x)=-J_2x
\label{eq:bilinear}
\end{equation}
gives the Hamiltonian-level mapping between the $J_1$-$J_2$ and $J-h$ chains with $J=J_2$ and $h=J_1$.

The configuration-level identities (\ref{eq:geom_nn_identity}) and (\ref{eq:geom_nnn_identity}) immediately give the observable correspondences. For the linear case,
\begin{eqnarray*}
\langle\bm S_i\cdot\bm S_{i+1}\rangle_{J_1-J_2}
&=&\langle\hat{\bm z}\cdot\boldsymbol\sigma_i\rangle_{J-h},\\
\langle\bm S_i\cdot\bm S_{i+2}\rangle_{J_1-J_2}
&=&\langle\boldsymbol\sigma_i\cdot\boldsymbol\sigma_{i+1}\rangle_{J-h}.
\label{eq:dictionary}
\end{eqnarray*}
Thus a zero-field NN bond correlation becomes a field-induced magnetization, whereas the NNN correlation becomes an NN exchange correlation.

\subsection{Physical nonlinear example: Single-ion anisotropy becomes biquadratic exchange\label{sec:nonlinear}}

The human collaborator recognized the following application. Consider the standard vector model in a longitudinal field with uniaxial single-ion anisotropy,
\begin{equation}
\mathcal H_{\rm field}=
-J\sum_{i=1}^{L-2}\boldsymbol\sigma_i\cdot\boldsymbol\sigma_{i+1}
-h\sum_{i=1}^{L-1}\sigma_i^z
+D\sum_{i=1}^{L-1}(\sigma_i^z)^2,
\label{eq:SIAfield}
\end{equation}
where $\sigma_i^z=\hat{\bm z}\cdot\boldsymbol\sigma_i$.
Set
\begin{equation}
U_i(x)=-hx+Dx^2,\qquad V_i(x)=-Jx.
\end{equation}
The theorem gives the exactly equivalent zero-field frustrated model
\begin{eqnarray}
\mathcal H_{\rm spontaneous}=&
-&J\sum_{i=1}^{L-2}\bm S_i\!\cdot\!\bm S_{i+2}-h\sum_{i=1}^{L-1}\bm S_i\!\cdot\!\bm S_{i+1}
\nonumber\\
&+&D\sum_{i=1}^{L-1}(\bm S_i\!\cdot\!\bm S_{i+1})^2.
\label{eq:BBQsp}
\end{eqnarray}
Thus, within this exact open-chain mapping,
single-ion anisotropy $ \longleftrightarrow$ biquadratic exchange.

Biquadratic exchange is a standard ingredient of effective multiorbital spin models, including models developed for iron-based superconductors~\cite{Wysocki_NP_11_Nebraska_biquadratic}. This specific nonlinear correspondence also makes the discovery provenance tangible: neither the single-ion-anisotropy/biquadratic-exchange relation nor the arbitrary-$U,V$ hypothesis was in the human researcher's AHS when the AI proposed the extension. It is therefore a concrete many-body consequence that lay beyond the human's original line of sight. The theorem converts the field-controlled anisotropic problem into a spontaneous zero-field problem containing NN, NNN, and biquadratic exchange without requiring either model to be solved in closed form. The configuration-level identities (\ref{eq:geom_nn_identity}) and (\ref{eq:geom_nnn_identity}) give the additional exact dictionary
$$
\langle(\sigma_i^z)^2\rangle_{\rm field}
=\langle(\bm S_i\cdot\bm S_{i+1})^2\rangle_{\rm spontaneous}.
$$
For $n=1$ (the Ising model) the squared bond is identically unity, so the nonlinear content begins at $n>1$.

\subsection{Experimental relevance and broader implications\label{sec:experiment}}

Notably, the arbitrary inhomogeneous interaction functions $U_i$ and $V_i$ bring a much broader class of inhomogeneous geometries within the scope of the theorem after a suitable one-dimensional ordering.  Examples include zigzag ladders~\cite{Yin_Potts_J1-J2_1D,Stephenson_CanJP_70_J1-J2-Ising-chain,Fleszar_Baskaran_JPC_85_J1-J2,White_J1_J2}, regular ladders~\cite{Yin_MPT}, decorated/delta/sawtooth chains~\cite{Stephenson_CanJP_70_J1-J2-Ising-chain,Dmitriev_PRB_19_delta_chain,Heinze_PRL_25_sawtooth_chain,Yin_Ising_III_PRL}, dimerized chains~\cite{Routh_PRB_22_J1-J2_spin-Peierls_dimerized}, and disordered chains~\cite{Samaj_90_Dual_inhomogeneous_disorder_Ising_Potts_1D}, generated through appropriate bond/site dependence or by setting selected $U_i$ and $V_i$ to zero.  This generality may be among the most practically relevant aspects of the theorem: real quasi-one-dimensional materials and engineered spin systems---including cold-atom lattices, trapped-ion chains, and STM-built atomic chains---are rarely perfectly homogeneous.  An exact mapping that survives arbitrary inhomogeneity is therefore considerably closer to experimentally relevant situations than its translationally invariant textbook limit.

Likewise, an exact mapping that survives finite size is particularly relevant to nanoscale one-dimensional systems, for which boundaries and finite-size effects can substantially influence the observed behavior.  The finite-$L$ theorem therefore provides information that is inaccessible from the thermodynamic limit alone.

The theorem also has a practical, almost engineering-oriented implication.  Frustration, for example through a particular competing $J_1/J_2$ ratio, is often set by microscopic exchange couplings and can be difficult to tune continuously, whereas an external field is frequently much easier to control in experiments or simulators.  Because the competing $J_1$-$J_2$ chain and the corresponding nearest-neighbor $J$-$h$ chain are exactly related at the Hamiltonian level for open classical spin chains, phenomena expressed in the mapped variables can in principle be reproduced or scanned by varying a field in a noncompeting chain rather than by re-engineering exchange couplings.  The mapping thus provides a nontrivial reinterpretation and emulation tool, provided that the corresponding observables and boundary conditions are translated consistently.

Although the theorem itself applies to classical open spin chains, it also supplies a four-axis and actually infinite-dimensional AHS against which departures from the exact classical correspondence can be organized.  It can therefore serve as a controlled reference for separating quantum from classical effects in quantum spin chains and for isolating dimensionality effects in higher-dimensional classical spin systems.

\section{The $q$-state Standard Potts Chain\label{sec:potts}}

The target open-chain standard Potts Hamiltonians generalized with arbitrary interaction functions $U_i,V_i$ are
\begin{eqnarray}
\mathcal H_{\rm range\,2}^{\rm Potts}
&=&
\sum_{i=1}^{L-1}
U_i(\delta_{\sigma_i,\sigma_{i+1}})
+\sum_{i=1}^{L-2}
V_i(\delta_{\sigma_i,\sigma_{i+2}}),
\label{eq:potts_micro_H_original}\\
\mathcal H_{\rm range\,1}^{\rm Potts}
&=&\sum_{i=1}^{L-1}U_i(\delta_{\tau_i,P})+
\sum_{i=1}^{L-2}V_i(\delta_{\tau_i,\tau_{i+1}}),
\label{eq:potts_micro_H_mapped}
\end{eqnarray}
where $\sigma_i$ and $\tau_i$ denote one of $q$ Potts states at site $i$, and $P$ a specific Potts state. $\delta_{\sigma_{i},\sigma_{i+1}}$ is the Kronecker delta (1 if  $\sigma_{i}=\sigma_{i+1}$ and 0 otherwise). 
Because $d\in\{0,1\}$, every function satisfies $U_i(d)=U_i(0)+[U_i(1)-U_i(0)]d$, i.e., the arbitrary-function Potts extension is algebraically just an inhomogeneous standard Potts coupling plus a constant.  

\begin{theorem} 
For every integer $q\ge2$ and every chain size $L$, there exists a bijection,
\begin{equation}
\{{\sigma}_1,{\sigma}_2,{\sigma}_3,\dots,{\sigma}_L\}\longleftrightarrow \{{\sigma}_1,\tau_1,\tau_2,\dots,\tau_{L-1}\},   
\label{eq:bijection_potts}
\end{equation}
satisfying
\begin{eqnarray}
\delta_{\sigma_i,\sigma_{i+1}}
&=&\delta_{\tau_i,0},
\label{eq:potts_micro_nn}\\
\delta_{\sigma_i,\sigma_{i+2}}
&=&\delta_{\tau_i,\tau_{i+1}}.
\label{eq:potts_micro_nnn}  
\end{eqnarray}
Then, for arbitrary inhomogeneous functions 
$U_i(d),V_i(d)$ on $d\in \{0,1\}$, 
Eqs.~\eqref{eq:potts_micro_H_original} and \eqref{eq:potts_micro_H_mapped} can be mapped exactly onto each other under the change of variables. The Hamiltonian mapping is temperature independent, and at every finite temperature the partition functions obey the exact identity
\begin{equation}
 Z_{\rm range\,2}^{\rm Potts}(L)
 =Z_{\rm range\,2}^{\rm Potts}(1)\cdot Z_{\rm range\,1}^{\rm Potts}(L-1),
\label{eq:OBC_partition_identity_Potts}
\end{equation}
where $Z_{\rm range\,2}^{\rm Potts}(1)=q$ and $Z_{\mathrm{range}\,1}^{\rm Potts}(0)=1$.
\label{thm:potts}
\end{theorem}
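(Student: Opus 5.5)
Identify the $q$ Potts states with $\mathbb Z_q=\{0,1,\dots,q-1\}$ and take $P=0$ in Eq.~\eqref{eq:potts_micro_H_mapped}, since $P$ is only a label and any choice is related to $0$ by a relabeling of the $\tau$'s. The Householder moving frame of Theorem~\ref{thm:UiVi} has a discrete analogue: a group of involutive permutations acting on the $q$ states. For $x\in\mathbb Z_q$, let $\mathsf h_x$ be the reflection $\mathsf h_x(y)=x-y \pmod q$. It plays the role of $\mathsf H(\bm x)$:
\begin{itemize}
\item $\mathsf h_x$ is a bijection of $\mathbb Z_q$;
\item it is an involution, $\mathsf h_x^{-1}=\mathsf h_x$, which is the discrete counterpart of $\mathsf H^{\mathsf T}=\mathsf H$ with $\mathsf H$ orthogonal;
\item it sends the reference state to $x$, $\mathsf h_x(0)=x$.
\end{itemize}
The plan is to repeat the recursive construction verbatim with these maps.

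Set $g_1=\mathsf h_{\sigma_1}$. For $i=1,\dots,L-1$ define
\begin{equation}
\tau_i=g_i^{-1}(\sigma_{i+1}),\qquad g_{i+1}=g_i\circ \mathsf h_{\tau_i}.
\end{equation}
Induction gives $g_{i+1}(0)=g_i(\mathsf h_{\tau_i}(0))=g_i(\tau_i)=\sigma_{i+1}$, so $\sigma_i=g_i(0)$ for all $i$.

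Since each $g_i$ is a bijection, the Kronecker delta is invariant under it, $\delta_{a,b}=\delta_{g(a),g(b)}$. This is the Potts counterpart of the invariance of the dot product under $O(n)$. The nearest-neighbor identity then follows:
\[
\delta_{\sigma_i,\sigma_{i+1}}=\delta_{g_i(0),\,g_i(\tau_i)}=\delta_{0,\tau_i}.
\]
For the next-nearest-neighbor identity, use $\sigma_{i+2}=g_i\,\mathsf h_{\tau_i}(\tau_{i+1})$ and $\sigma_i=g_i(0)$:
\[
\delta_{\sigma_i,\sigma_{i+2}}=\delta_{0,\,\mathsf h_{\tau_i}(\tau_{i+1})}=\delta_{\mathsf h_{\tau_i}(0),\,\tau_{i+1}}=\delta_{\tau_i,\tau_{i+1}}.
\]
The middle step uses the involution property: $\mathsf h_{\tau_i}$ may be applied to both arguments, and $\mathsf h_{\tau_i}\mathsf h_{\tau_i}=\mathrm{id}$. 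This is exactly where $\mathsf H^{\mathsf T}=\mathsf H$ entered the $O(n)$ proof. With the explicit choice $\mathsf h_x(y)=x-y$, everything reduces to the concrete rule $\tau_i=\sigma_{i+1}\pm\sigma_i \pmod q$ with alternating signs, and the identities could also be verified directly. I will nonetheless keep the frame form so that the argument visibly parallels Theorem~\ref{thm:UiVi}.

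For the bijection, note that given $\sigma_1$ and $\tau_1,\dots,\tau_{L-1}$, the recursion reconstructs $g_i$ and $\sigma_{i+1}=g_i(\tau_i)$ uniquely. The forward map is therefore injective. Both configuration sets have $q^L$ elements, so it is a bijection. Substituting the two delta identities into Eq.~\eqref{eq:potts_micro_H_original} gives Eq.~\eqref{eq:potts_micro_H_mapped} termwise, for arbitrary $U_i,V_i$.

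For the partition function, the sum over $\{\sigma_i\}$ becomes the sum over $\sigma_1$ and $\{\tau_i\}$. The transformed Hamiltonian does not depend on $\sigma_1$, so summing over $\sigma_1$ yields the factor $q=Z^{\rm Potts}_{\rm range\,2}(1)$, and what remains is $Z^{\rm Potts}_{\rm range\,1}(L-1)$. The counting measure is trivially preserved because for fixed preceding variables the map $\tau_i\mapsto\sigma_{i+1}$ is a bijection. The case $L=1$ is the convention $Z(0)=1$.

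The main obstacle is finding the right family $\{\mathsf h_x\}$. It must consist of involutions sending $0$ to $x$, and the full symmetric group has no canonical choice. A naive rotation $y\mapsto y+x$ would fail the next-nearest-neighbor identity because it gives $\tau_{i+1}=-\tau_i$-type mismatches. The reflection $y\mapsto x-y$ on $\mathbb Z_q$ resolves this. An alternative is the transposition $(0\,x)$, with $\mathsf h_0=\mathrm{id}$, which mimics the Householder convention even more closely; I would check the transposition version too, since either suffices.
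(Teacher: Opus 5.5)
Your proof is correct, but it is organized differently from the paper's. The paper builds no frame for the Potts chain. It fixes the $\mathbb Z_q$ labeling with $P=0$, writes down the staggered bond variables $\tau_i=(-1)^i(\sigma_i-\sigma_{i+1}) \pmod q$ directly, and checks the next-nearest-neighbor identity by the one-line computation $\sigma_i-\sigma_{i+2}=b_i+b_{i+1}=(-1)^i(\tau_i-\tau_{i+1})$.

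You instead transplant the recursive Householder frame of Theorem~\ref{thm:UiVi} to permutations of the label set. With your choice $\mathsf h_x(y)=x-y$, the recursion gives $g_i(y)=\sigma_i+(-1)^i y$. This is alternately a reflection and a translation, i.e., the clock restriction of the planar frame $\mathsf G_i=R(\Theta_i)D^i$ of Eq.~\eqref{eq:n2_G}. Hence $\tau_i=(-1)^i(\sigma_{i+1}-\sigma_i)$, which is exactly minus the paper's variable. Since $\tau\mapsto-\tau$ fixes $0$ and preserves equalities, the two maps are equivalent.

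Your side remark ``$\tau_i=\sigma_{i+1}\pm\sigma_i$'' is misstated. The alternating sign multiplies the whole difference, and $\tau_i=\sigma_{i+1}+\sigma_i$ would not reproduce $\delta_{\sigma_i,\sigma_{i+1}}$. Because you keep the frame form, the proof itself is unaffected.

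What each route buys:
\begin{itemize}
\item The paper's route gives brevity and an explicit closed-form change of variables tied to the $D_q=\mathbb Z_q\rtimes\mathbb Z_2$ picture.
\item Yours isolates exactly what is used: invariance of the Kronecker delta under the frame permutations, plus local maps that swap $0$ and $x$. Your involution property gives $\mathsf h_x^{-1}(0)=x$, which is precisely the discrete form $g_{i+1}^{-1}(\sigma_i)=g_i^{-1}(\sigma_{i+1})$ of the reciprocity in Eq.~\eqref{eq:reciprocity}.
\end{itemize}
Because the standard Potts delta is invariant under all of $\mathbb S_q$, your transposition family $(0\,x)$ indeed works equally well. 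So your formulation shows that the cyclic $\mathbb Z_q$ coordinatization, which the paper presents as the key representational step, is a convenient choice rather than a necessity for the standard Potts chain. Only for the clock chain, whose interaction is merely $D_q$-invariant, are the frame maps forced (up to trivial alternatives at $x=0$ and $x=q/2$) to be the reflections $y\mapsto x-y$.
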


\textbf{Proof.---}The standard Potts interaction depends solely on equality of two states and therefore remains fully permutation invariant. 
However, the key to the mapping is temporarily putting the full $\mathbb{S}_q$ symmetry aside and 
labeling the states by $\sigma_i\in\{0,1,\ldots,q-1\}$ and $P=0$ according to $\mathbb Z_q$ with all additions and subtractions understood modulo $q$.  
Introduce the NN bond variables
\begin{equation}
b_i=\sigma_i-\sigma_{i+1}\pmod q,
\qquad i=1,\ldots,L-1,
\label{eq:potts_micro_bond}
\end{equation}
and then perform the staggered relabeling, now understood geometrically as a $\mathbb Z_2$ reflection,
\begin{equation}
\tau_i=(-1)^i b_i\pmod q.
\label{eq:potts_micro_tau}
\end{equation} 
The NN Potts invariant becomes
\begin{equation}
\delta_{\sigma_i,\sigma_{i+1}}
=\delta_{b_i,0}
=\delta_{\tau_i,0}.
\label{eq:potts_micro_nn2}
\end{equation}
For the NNN invariant,
\ignore{
\begin{align}
\sigma_i&-\sigma_{i+2}
=(\sigma_i-\sigma_{i+1})+(\sigma_{i+1}-\sigma_{i+2})
\nonumber\\
&=b_i+b_{i+1}
=(-1)^i(\tau_i-\tau_{i+1})\pmod q,
\end{align}
so that
\begin{equation}
\delta_{\sigma_i,\sigma_{i+2}}
=\delta_{\tau_i,\tau_{i+1}}.
\label{eq:potts_micro_nnn2}
\end{equation}
}
\begin{equation}
\delta_{\sigma_i,\sigma_{i+2}}
=\delta_{\sigma_i-\sigma_{i+2},0}
=\delta_{\tau_i-\tau_{i+1},0}
=\delta_{\tau_i,\tau_{i+1}}.
\label{eq:potts_micro_nnn2}
\end{equation}
Equations~(\ref{eq:potts_micro_nn}) and (\ref{eq:potts_micro_nnn}) are purely kinematic identities and establish the Hamiltonian identity between Eqs.~(\ref{eq:potts_micro_H_original}) and (\ref{eq:potts_micro_H_mapped}). 
Then, the partition function
\begin{eqnarray}
Z_{\rm range\,2}^{\rm Potts}(L)
&=&
\sum_{\sigma_1,\ldots,\sigma_L=0}^{q-1}
e^{-\beta \mathcal H_{\rm range\,2}^{\rm Potts}(L)
} \nonumber\\
&=&\sum_{\sigma_1=0}^{q-1}\left[ \sum_{\tau_1,\tau_2,\ldots,\tau_{L-1}=0}^{q-1}
e^{-\beta \mathcal H_{\rm range\,1}^{\rm Potts}(L-1)}\right]\nonumber\\
&=&q\; Z_{\rm range\,1}^{\rm Potts}(L-1).
\label{eq:J1J2-Potts-OBC-Z}
\end{eqnarray}
Since $H_{\rm range\,1}^{\rm Potts}(L-1)$ is independent of $\sigma_1$, the summation with respect to $\sigma_1$ gives the factor $q=Z_{\rm range\,2}^{\rm Potts}(1)$. \hfill$\blacksquare$ 

The simpler $J$-$h$ Potts open chain has been solved in closed form~\cite{Glumac_JPA_94_Potts_J1-h,ChangShrock2009}. Therefore, Theorem~\ref{thm:potts} yields a closed-form exact solution to the 1D $J_1$-$J_2$ Potts open chain for every $q\ge2$ and every $L\ge1$.
The exact finite-size partition function is written in a compact form
\begin{equation}
Z_{J_1-J_2}^{\rm Potts}(L)=q\langle b|\mathbb{T}_{2\times2}^{L-2}|b\rangle,
\qquad L\ge2,
\label{eq:JhOBCmatrix}
\end{equation}
where $\langle b|=|b\rangle^{\mathsf T}=\left(y, \sqrt{q-1}\right)$, and \begin{equation}
\mathbb{T}_{2\times2}=
\begin{pmatrix}
 xy^2 & \sqrt{q-1}\,y\\[1mm]
 \sqrt{q-1}\,y & x+q-2
\end{pmatrix}
\label{eq:Mcommon}
\end{equation}
with $x=e^{\beta J_2}$ and $y=e^{\beta J_1/2}$. $\mathbb{T}_{2\times2}$ is identical to the reduced transfer matrix in the $2\times2$ MSS, 
whose diagonalization is elementary~\cite{Yin_Potts_J1-J2_1D,Glumac_JPA_94_Potts_J1-h}.

\ignore{Nevertheless, in the thermodynamic limit $L\to\infty$, the free energy densities, Eq.~(\ref{eq:free}), resulted from Eq.~(\ref{eq:open}) satisfy
$$
-\frac{1}{\beta} \Big(\ln\lambda\Big)_{J_1-J_2}=-\frac{1}{\beta}\lim_{L\to\infty}\left({\frac{1}{L}\ln q + \frac{L-1}{L}\ln\lambda}\right)_{J-h}.$$
That is, the $q$ factor's contribution is thermodynamically vanishing and the two systems of different symmetry can be thermodynamically equivalent.
}

A similar derivation of the microscopic mapping for the $q$-state clock Potts model is given in Appendix~\ref{appendix:special_n2} around Eq.~(\ref{eq:Dq_symmetry}). Further implications are discussed in Sec.~\ref{sec:implication}.

\section{Human–AI Discovery, Co-development, and Implications\label{sec:discussion}}

The following discussion is based on the 154-page timestamped human--AI conversation transcript, which consists of two parts, Part I and Part II, containing 138 and 42 messages, respectively~\cite{data:9d}. As a contemporaneous research record rather than a polished scientific account, the transcript intentionally preserves AI errors, incomplete reasoning, and premature claims, some of which were not immediately corrected by the human under the non-being strategy described below; it also provides a record for controlled studies of the discovery process. Hereafter, specific messages are cited as (I:xxxx) or (II:xxxx), where I/II denotes the transcript part and xxxx the message number.

\begin{table}[b]
\caption{Four discovery types used in this work. Reflection on Types I--III motivated this work. Type IV was added retrospectively during the human--AI analysis. The classification is contextual rather than absolute: the same result can change type as the state of scientific awareness changes.}
\label{tab:types}
\begin{tabular}{@{}ll@{}}
\hline\hline
Type & \parbox[t]{2.70in}{Operational meaning} \\
\hline
I & \parbox[t]{2.70in}{Solve a longstanding known unknown (KU).}\\
II & \parbox[t]{2.70in}{Reveal a remote unknown unknown (UU).}\\
III & \parbox[t]{2.70in}{Systematically develop a landscape of KUs opened by I/II/IV, leading to new I/II/IV.}\\ \hline
IV & \parbox[t]{2.70in}
{Recognize a blind-spot UU: a latent connection among already available ingredients.}\\
\hline\hline
\end{tabular}
\end{table}

\subsection{Organizing human--AI research\label{sec:organizing}}

We first describe the conceptual framework that motivated this work and was subsequently refined by its outcomes. The recognition problem led to a context-dependent classification that now comprises four broad types, adopting the known/unknown terminology used in previous discussions of scientific knowledge~\cite{Logan_09_Known}
: Types I--III motivated the present study (I:0061--I:0064), while Type IV emerged in hindsight (II:0023) and then gave rise to the last axis of the four-axis AHS (black axis in Fig.~\ref{fig:AHS}; II:0027). 
Type I solves a longstanding known unknown (KU), Type II reveals an unknown unknown (UU) far beyond the existing AHS, Type III systematically develops a landscape of KUs opened by previous discoveries, and Type IV recognizes a blind-spot UU: a latent connection among already available ingredients. Thus, Type IV is genuinely a UU before revelation but readily becomes a known known (KK) afterward. 
Table~\ref{tab:types} summarizes this terminology. 
The classification focuses on well-documented timestamped recognition problems rather than less objective measures of difficulty or significance. A repeated pattern of
\begin{center}
$\cdots\to$ III $\to$ I/II/IV $\to$ new III $\to$ new I/II/IV $\to\cdots$ 
\end{center}
\noindent forms a helix of discovery.

\begin{figure}[b]
    \begin{center}
\includegraphics[width=0.8\columnwidth,clip=true,angle=0]{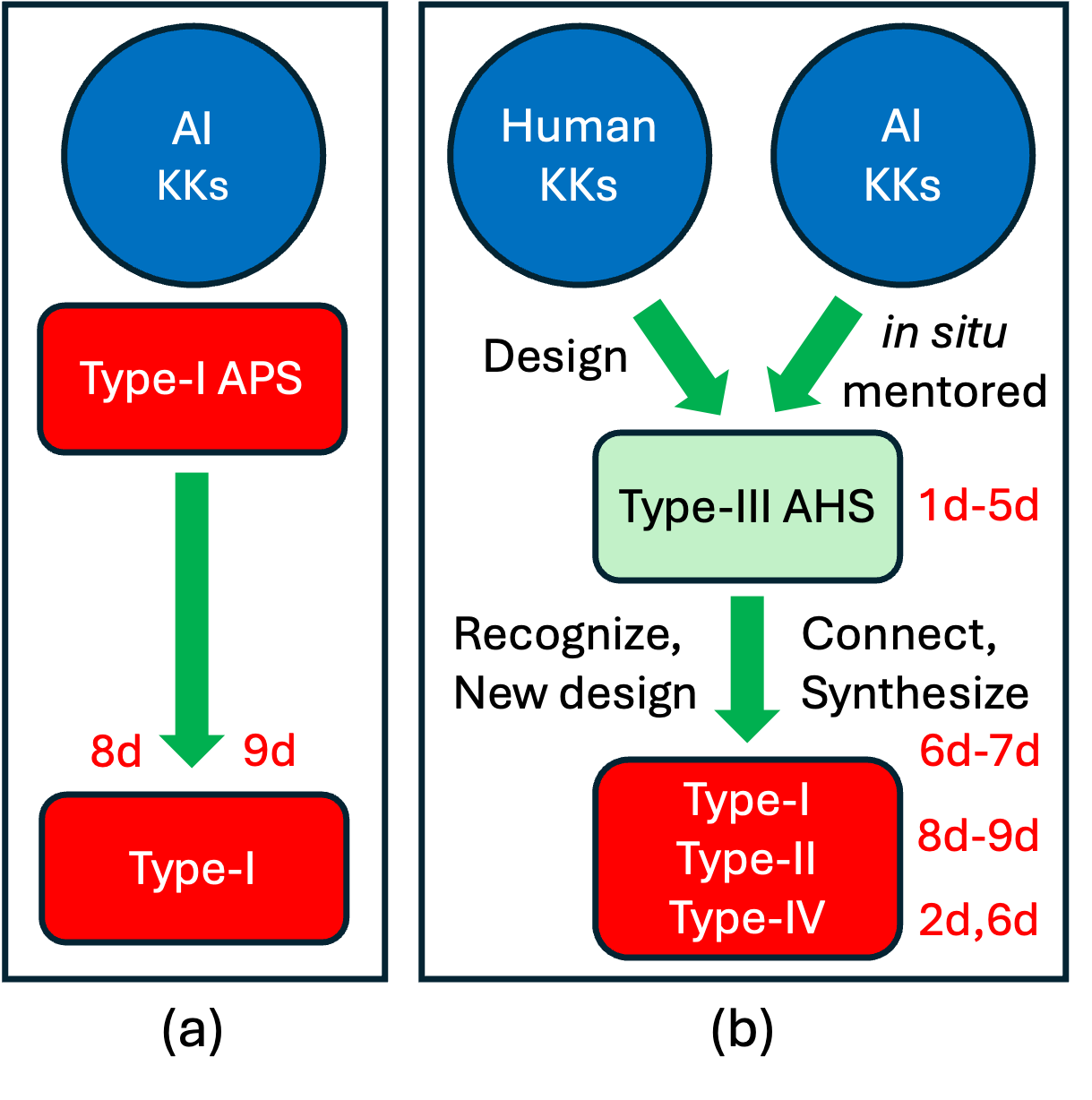}
    \end{center}
\caption{Schematics of (a) an 8d/9d one-shot autonomous Type-I route vs (b) the human--AI co-development approach, where the research state, active context, and the agents' roles are progressively updated as the research evolves from an initial Type-III AHS towards Type-I/II/IV breakthroughs.
Both approaches start with intelligence equipped with KKs.}
\label{fig:Y}
\end{figure} 

For the methodological question about organizing human--AI research, a human-designed longitudinal Type-III AHS offers a natural starting point: previous discoveries open identifiable KUs that can be pursued systematically by the AI under the human's mentorship, while their development can lead to autonomous Type-I/II/IV breakthroughs. To monitor the AI's roles in the discovery helix more precisely than a vague statement such as ``AI assisted this research,'' we developed a nine-dan AI-contribution framework~\cite{Yin_Potts_UNPC}. Other capability-oriented frameworks for human--AI scientific collaboration have also begun to emerge~\cite{Shao_26_SciSciGPT}. Our framework is divided into two groups: (1) 1d through 5d, in which the human has an independent solution path and the AI is employed to dramatically accelerate the project, and (2) 6d through 9d, in which the human had not identified a viable solution path before involving the AI, and the AI acts as a scientific discoverer and an inspirational research partner.
If Type III is the cradle of Types I/II/IV, sustained AI participation in Type III---with progressive 1d--5d contributions---may likewise be a cradle for 6d--9d scientific performance, including autonomous hypothesis generation outside the AHS.

We distinguish an active problem space (APS) from an AHS. This distinction is related to, but operationally different from the problem-, hypothesis-, and experiment-space formulations used in cognitive accounts of scientific discovery~\cite{Klahr_88_AHS}: problem statements define an APS, not necessarily an AHS. An AHS begins when concrete candidate explanations, representations, mechanisms, or solution routes are being actively entertained~\cite{Klahr_88_AHS,Chandrasekharan_15_Building}.
This clarifies why Type III matters: it progressively creates concrete structure within an actual AHS. Operationally, only selected problems from the human APS are used to design the initial AHS. If the AI solves or makes significant progress on unreleased problems in the human APS---for which the human could not form an AHS---during the AHS course, that is an operationally identifiable event of AI going outside the AHS. Further, if the problems are outside the human APS, that could be a strong indicator of AI-driven or AI-led breakthroughs. 

Hence, the human--AI co-development mode considered here should be distinguished from the one-shot autonomous problem solving mode and other AI scientific-discovery approaches: here the research state, active context, and the agents' roles are progressively updated, as illustrated in Fig.~\ref{fig:Y}. At the center of this Y-shaped workflow is a human-designed initial Type-III AHS for AI to develop and maintain task-relevant scientific context across a progression of real research problems. 
In the 1d--5d track, the human can solve the selected KUs independently and therefore mentor the AI. In this setting, the AI is expected not only to accelerate the project but also to generate hypotheses outside the AHS because of its broad knowledge base and its ability to connect representations and synthesize solution methods~\cite{Gottweis_25_AI_Coscientist,Penades_25_AI_Bacteria}. 
The human remains engaged throughout to recognize the implications of AI-generated results and update the AHS accordingly.  As in a Socratic dialogue, 
the solution, or even the problem (either outside or hidden inside the human APS), emerges from the human--AI conversation. This makes a difference in practice: for a Type-I APS-centered approach, one may prefer the highest-performance reasoning option. By contrast, in a Type-III AHS-centered approach, a nonmaximal reasoning mode with faster response, better suited to sustaining rapid human--AI iteration, may be preferable. For example, among the five reasoning options available in our ChatGPT 5.6 interface at the time of this study---\texttt{Instant}, \texttt{Medium}, \texttt{High}, \texttt{Extra High}, and \texttt{Pro}---we used \texttt{High} for discovery, \texttt{Extra High} for critical manuscript review, and \texttt{Pro} for a single final round of hostile manuscript review before submission.


As for the art of prompting, the human adopted an operational principle inspired by the ancient Chinese book \emph{Daodejing} (\emph{Tao Te Ching}), attributed to Laozi (Lao-tzu), in the human's own translation: ``Therefore, constantly non-being: to observe its wonders. Constantly being: to observe its limits.'' For example, when the AI made errors, the human adopted a hybrid approach: immediate correction (being) when the errors were mathematical~\cite{Yin_Potts_J1-J2_1D}, and delayed correction (non-being) when the errors concerned recognition, understanding, attribution, or novelty---most of which the AI self-corrected at later stages as context accumulated and its understanding developed; e.g., the AI believed that it had discovered the \(n=3\) bond-angle--dihedral-angle approach, later recognized as a rediscovery. A similar choice arose when an AI result could reshape the research trajectory: whether to pivot immediately (being) or allow the existing AHS development to continue (non-being), as shown in Sec.~\ref{sec:reshape}. Together, these choices identify error- and trajectory-dependent intervention as a broader being--non-being dimension of human--AI research (see Sec.~\ref{sec:being}). 

\ignore{
\begin{quote}
\begin{center}
\emph{
\small
Therefore, constantly non-being: to observe its wonders.\\
Constantly being: to observe its limits.\\
}
\end{center}
\end{quote}
}

\subsection{Human--AI research trajectory\label{sec:trajectory}}
The physics and AI-contribution progressions are related but distinct: discovery type describes what occurred relative to the scientific landscape, whereas the dan level~\cite{Yin_Potts_UNPC} describes the AI's role.  Fig.~\ref{fig:AHS} shows the physics expansion from a single point~\cite{Yin_Potts_J1-J2_1D} to a four-axis AHS. Using the same color scheme, 
Fig.~\ref{fig:human--AI} summarizes the human–AI trajectory and major AI-contribution episodes, based on the timestamped human--AI conversation transcript~\cite{data:9d},  showing that the human–AI dialogue contained multiple interleaved threads that were sometimes left open (non-being) and resumed later (being). 

\begin{figure}[b]
    \begin{center}
\includegraphics[width=\columnwidth,clip=true,angle=0]{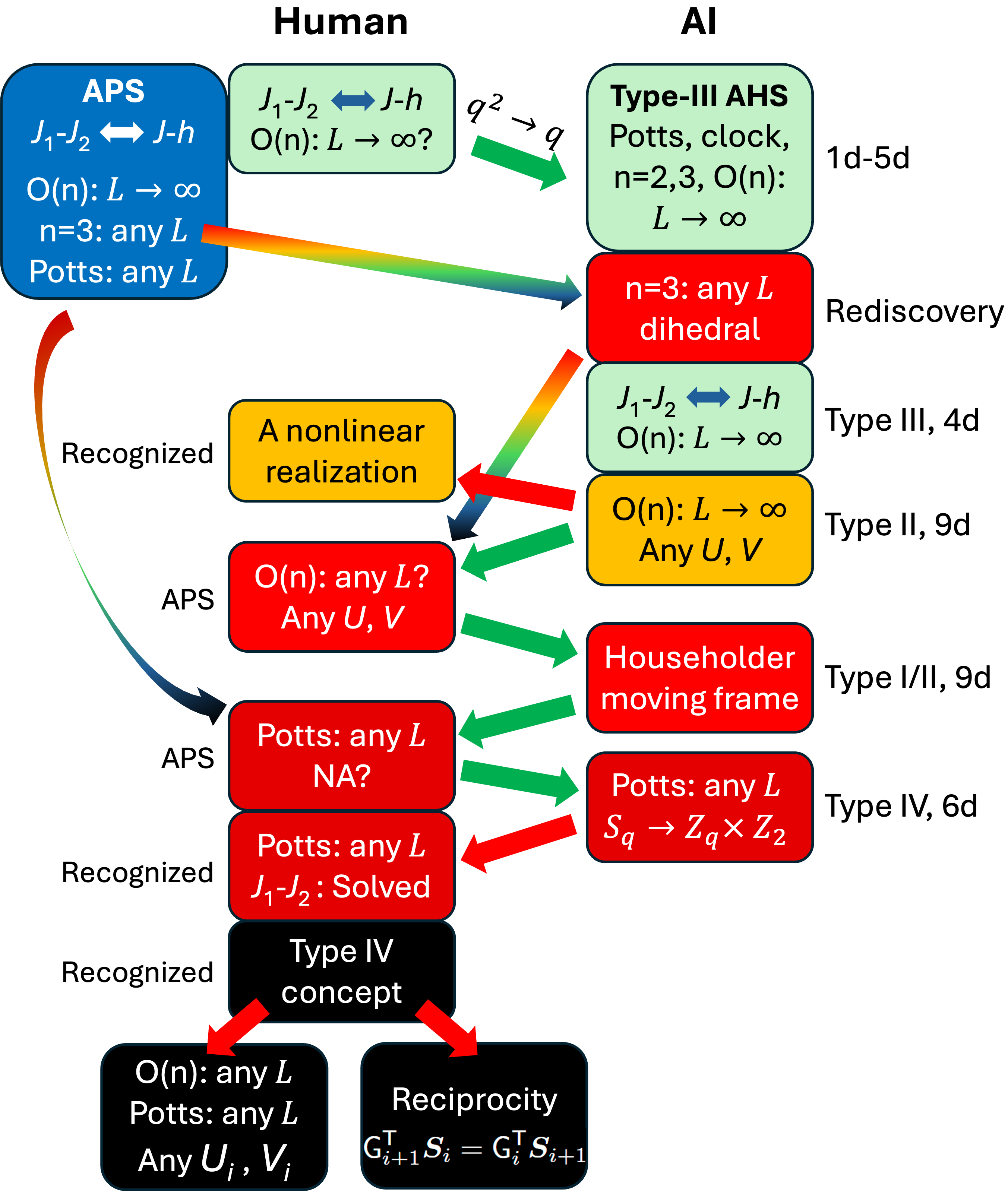}
    \end{center}
\caption{Schematic of human--AI co-development trajectory documented in the timestamped conversation transcript~\cite{data:9d} through four AHS stages represented by green, orange, red, and black rounded rectangles (RRs). The color scheme is the same as Fig.~\ref{fig:AHS}: Green stands for the human-designed initial Type-III AHS with the starting point being the $q^2\to q$ transfer-matrix reduction for the Potts chain, orange/red the AI-driven discovery space beyond the initial AHS (in which the human updated the AHS following AI-generated results), and black the final AHS for inhomogeneity.  $L\to\infty$ and `any $L$' mean thermodynamic-limit and Hamiltonian-level mapping problems, respectively. Blue RR denotes the human's initial APS, from which only the $L\to\infty$ mapping problem was exposed to the initial AHS, but the AI autonomously identified the `any $L$' mapping problem (first red RR on the AI side); the arrows with gradient color fill show the APS development. By contrast, orange and black RRs are outside the human's APS. Green arrows indicate the workflow and
red arrows the human recognition of the implications of AI-generated results. The discovery types and the dan ranks of the AI contribution are listed on the right side. Black RRs also highlight Type IV as a discovery mode that may accelerate breakthroughs through deep understanding of the implications of AI-generated results.
}
\label{fig:human--AI}
\end{figure}

\subsubsection{Human APS, initial AHS, and a Type-II breakthrough \label{sec:start}}

In the beginning, the Hamiltonian-level mapping questions for the Potts and $O(3)$ chains belonged to the human APS but not to the initial Type-III AHS, because the human had no active hypothesis or constructive route toward solving them. These questions were deliberately not supplied to the AI and therefore were absent from the AI's initial problem context.

Instead, the human selected the problem of the thermodynamic $J_1$-$J_2\leftrightarrow J$-$h$ mapping for $n$-vector spins to design the initial AHS  as the landscape expanded along the model-choice axis: standard Potts $\to$ clock $\to$ XY $\to$ Heisenberg $\to O(n)$ (green axis in Fig.~\ref{fig:AHS}; green rounded rectangles in Fig.~\ref{fig:human--AI}). The AI assessed its contribution as 8d  (I:0032), though it was 4d by definition~\cite{Yin_Potts_UNPC}; the human chose non-being and did not correct the AI. 

Here the AI remained inside the research loop. The accumulated context eventually enabled the AI to autonomously generate the homogeneous arbitrary-$U,V$ hypothesis (I:0048), which clearly lay outside the human's initial AHS (orange axis in Fig.~\ref{fig:AHS}; orange rounded
rectangles in Fig.~\ref{fig:human--AI}), a Type-II discovery that expands the AHS to an infinite-dimensional function space. 

The AI also autonomously tested its nonlinear hypothesis numerically for discrete \(q\)-state clock spins, finding agreement at about \(10^{-15}\) between the largest eigenvalues of the $q^2\times q^2$ range-two and $q\times q$ range-one transfer matrices over multiple values of \(q\), \(\beta\), and nonlinear \(U,V\) functions (see the end of I:0054). This stress test was not part of the proof and was not prompted by the human, but was the AI's own extra effort to distinguish precisely the nonlinear extension from the linear mapping.

Under the task-level definition in Ref.~\cite{Yin_Potts_UNPC}, the arbitrary-$U,V$ hypothesis-generation and autonomous proof episode qualifies as 9d AI-led: the AI conceived the extension and completed its derivation outside both the human’s initial AHS and APS, while the human role was to challenge, verify, and recognize its geometrical connection to the zigzag construction (I:0055) and one example of its nonlinear applications (I:0067; Sec.~\ref{sec:nonlinear}). 

\subsubsection{AI-reshaped AHS and Type-I/II/IV breakthroughs\label{sec:reshape}}
During the Type-III development, the AI independently encountered the Hamiltonian-level mapping problem through the literature, brought it into its own APS, and rediscovered the bond-angle--dihedral-angle approach for $n=3$~\cite{Harada_ZPB_88_1D_J1-J2_Classical_Heisenberg,Harada_JPC_90_1D_J1-J2_Classical_Heisenberg_XY,Dmitriev_PRB_19_delta_chain}; it then prematurely regarded the resulting microscopic identities as a Hamiltonian-level proof (I:0020). This could potentially introduce a new $1/L$ axis to the AHS (red axis in Fig.~\ref{fig:AHS}; red rounded rectangle in Fig.~\ref{fig:human--AI}). 

However, the human chose \emph{non-being}---neither correcting the AI nor pursuing the tempting Hamiltonian-level mapping, but continuing the Type-III program until the AI autonomously discovered the arbitrary-\(U,V\) mapping; at that point, the human chose \emph{being} and pivoted immediately to the $O(n)$ microscopic mapping (red plane in Fig.~\ref{fig:AHS}; red rounded rectangle on the human side in Fig.~\ref{fig:human--AI}). The actual prompt was
\begin{quote}   
``Hi buddy, for the clock, XY, and Heisenberg models there is a remarkably clean microscopic geometric derivation, which exposes the mapping at the Hamiltonian level and a strong finite-size mapping for the partition functions for the open chains. This seems to hold for $n$-vector spins for all $n$ and arbitrary interaction functions.'' (I:0108)
\end{quote}
Note that the AI-generated arbitrary-$U,V$ result had by then already entered the shared AHS and had been adopted by this prompt. 
This result strongly suggested a purely geometric origin of the mapping. The AI responded by synthesizing the recursive Householder moving-frame proof (I:0109), a Type-II discovery far beyond the existing AHS because the \(n=3\) bond-angle--dihedral-angle approach had provided no evident route to arbitrary \(n\) (Sec.~\ref{sec:reflection}). In particular, the rigorous Householder construction resolved the 1990 stated challenge for $n=3$~\cite{Harada_JPC_90_1D_J1-J2_Classical_Heisenberg_XY}, a Type I. 
Under the task-level definitions of Ref.~\cite{Yin_Potts_UNPC}, this is classified as 9d rather than 8d because both the critical seed and the resulting construction originated from the AI within the documented trajectory.


The human's insight into the Householder construction and retrospective analysis of the bond-angle--dihedral-angle approach identified the reciprocity of the moving frames $\mathsf G_{i+1}^{\mathsf T} \bm S_{i}=\mathsf G_{i}^{\mathsf T}\bm S_{i+1}$, Eq.~(\ref{eq:reciprocity})---which had not been produced by the AI before---as an illuminating essence of the Householder construction, a Type IV (II:0041).

\subsubsection{Inhomogeneity, a Type-IV breakthrough}
Finally, recognizing the deeper implication of the AI-generated result that the geometric proof acts on the arguments of the interaction functions rather than the interactions themselves, the human extended the exact mapping to arbitrary inhomogeneous interactions. The actual prompt was
\begin{quote}
``This episode ends with the human's recognition that the geometric proof applies to arbitrary inhomogeneous interaction functions, i.e., $U\to U_i, V\to V_i$. That's a remarkable generalization, isn't it?'' (II:0027)
\end{quote}
This added the fourth AHS axis (black axis in Fig.~\ref{fig:AHS}; black rounded rectangle in Fig.~\ref{fig:human--AI}) without modifying the proof. Therefore, it emerged as a Type-IV breakthrough built upon the first three AHS axes; yet, relative to the initial AHS and APS, it was Type II. 

\subsection{From Potts implication to Type-IV discovery\label{sec:implication}}

As shown in Sec.~\ref{sec:potts}, the autonomous microscopic proof for the $q$-state standard Potts chain (the first red rounded rectangle) provides a particularly simple illustration of the recognition problem: the solution was hidden not by computational difficulty, but by representation.  Historically, the $q$-state Potts model was introduced as a simpler generalization of the Ising model than the $q$-state clock model~\cite{Potts_1952}.
Its conventional formulation emphasizes the full permutation symmetry $\mathbb S_q$, or $\mathbb S_{q-1}$ in a field selecting one state $P$.
The 2025 thermodynamic mapping likewise exploited these symmetries, reducing the
transfer matrices of the $J_1$-$J_2$ and $J$-$h$ Potts chains to their
$2\times2$ MSS for direct comparison~\cite{Yin_Potts_J1-J2_1D}.  In this
representation, the labels of the Potts states, including $P$, carry no physical
meaning.

The recognition breakthrough emerged from the human-designed Type-III research.
We deliberately stopped instead at an intermediate $q$-dimensional invariant
subspace (I:0015).  This weaker
reduction already proves the thermodynamic mapping and, importantly, is also the
step that survives for the clock model and continuous vector spins; it therefore
became the starting point of the initial AHS
(Fig.~\ref{fig:human--AI}, topmost green arrow, $q^2\to q$).
Analytic bookkeeping in the resulting $q\times q$ space led the AI to regard the
state labels as elements of $\mathbb Z_q$, use subtraction modulo $q$, and apply
the staggered $\mathbb Z_2$ reflection of Eq.~\eqref{eq:potts_micro_tau}.
The same construction yields the microscopic
$J_1$-$J_2\leftrightarrow J$-$h$ mapping for both the standard Potts and clock
chains (I:0121).  To our knowledge, this Hamiltonian-level mapping had not previously
been given for either model.  The crucial difference is representational:
the cyclic $\mathbb Z_q$ structure is intrinsic to the clock model, whereas in
the standard Potts model it is hidden by the conventional
$\mathbb S_q$-symmetric formulation (II:0013). The latter therefore constitutes the
sharper recognition breakthrough.  Together with the staggered
$\mathbb Z_2$ reflection, the cyclic variables expose the relevant dihedral
structure $D_q=\mathbb Z_q\rtimes\mathbb Z_2$.

More broadly, this episode illustrates how apparently neighboring facts can
remain disconnected when expressed in incompatible
representations~\cite{Klahr_88_AHS,Chandrasekharan_15_Building}.  The same
reflection structure becomes progressively more geometric in the vector models:
for $n=1$ it appears only as a sign reversal; for $n=2$ it is naturally read as
reversal of the sense of rotation rather than as reflection about an axis; and
even the $n=3$ bond-angle--dihedral-angle construction reveals the relative
geometry without exposing reflection as the organizing operation.  For general
$O(n)$, the recursive Householder moving frame makes that operation explicit.
Its significance is therefore not merely as an $O(n)$ generalization of earlier
special cases, but as a systematic construction of the relative variables and
local geometric operations that reveal hidden Hamiltonian equivalences.

These observations motivate us to introduce a fourth type of discovery as recognizing a blind-spot UU (II:0023, where it was first termed an unknown known; the term was subsequently changed to avoid confusion): a latent connection among already available ingredients that remains inaccessible because their conventional representations conceal the connection. Although readily knowable once revealed, such a connection differs from Type-II knowledge synthesis, which introduces a genuinely new relation lying far beyond the relevant AHS. As demonstrated here, revealing a Type-IV blind-spot UU can in turn seed Type-I, Type-II, and Type-III developments while also providing a unifying interpretation of earlier results.

Type IV also ties naturally to AI: a sufficiently broad reasoning system may be
unusually well suited to switching representations and making cross-domain
connections.  The $\mathbb Z_q$ cyclic structure is manifest in the clock
model~\cite{Potts_1952,Harada_JPSJ_84_1D_J1-J2_XY,Potts_RMP_82}, and related
cyclic labeling has appeared in mappings of standard Potts models with multisite
interactions~\cite{Turban_17_Potts_multisite}.  What had apparently not been
recognized was that, together with the staggered $\mathbb Z_2$ reflection, this
structure yields the microscopic $J_1$-$J_2\leftrightarrow J$-$h$ mapping for
the clock chain and, more unexpectedly, for the standard Potts chain after an
appropriate cyclic coordinatization.  Once the problem had been organized into
the APS and AHS, the AI made this cross-representational connection
autonomously.

Explicitly recognizing Type-IV unknowns as a distinct discovery target may encourage a broader search for results whose main barrier is recognition rather than derivation. The final steps of this work illustrate this principle: the revelation of the reciprocity identity (\ref{eq:reciprocity}) and the human's timely recognition that the proof does not require homogeneous interactions immediately extended the theorem to arbitrary inhomogeneous \(U_i\) and \(V_i\), adding a fourth axis to the AHS (black axis in Fig.~1) without modifying the proof (II:0027).

\subsection{Comparison with other AI discovery modes\label{sec:external}}

Recent AI-for-science developments provide useful comparisons for distinguishing autonomous hypothesis generation, autonomous problem solving, and human--AI co-development~\cite{Yin_Potts_J1-J2_1D,Cheng_NatMater_26_AI_Materials,Okabe_NatMater_26_SCIGEN,Schoener_JMMM_26_Materials_Discovery,Gottweis_25_AI_Coscientist,Penades_25_AI_Bacteria,9d_OpenAIUnitDistance_26}. One recent approach is the AI co-scientist, a multi-agent system designed to generate and rank scientific hypotheses aligned with scientist-provided research objectives and guidance, with experimentally validated scientific findings reported across several biomedical applications~\cite{Gottweis_25_AI_Coscientist}. One particularly transparent example is the bacterial study~\cite{Penades_25_AI_Bacteria}, in which the AI's top-ranked hypothesis independently matched a mechanism that had taken years to establish experimentally but remained unpublished, while additional AI-generated hypotheses opened new research directions. These results demonstrate the potential for scientifically meaningful autonomous hypothesis generation within scientist-defined research problems.

As for Type-I problem solving, Ref.~\cite{Yin_Potts_J1-J2_1D} exemplifies rapid iterative human--AI collaboration: the AI rapidly explored candidate solution routes while the human promptly intervened to block erroneous directions, and through this fast interplay the successful route was eventually uncovered. By contrast, the May 2026 autonomous resolution of the Erd\"{o}s planar unit-distance problem involved no such iterative human intervention~\cite{9d_OpenAIUnitDistance_26,9d_Sawin_26,9d_Alon_26}. 
The general-purpose reasoning model was not specially trained for that mathematical problem but evaluated across a collection of Erd\"{o}s problems—effectively an APS, rather than being supplied with a problem-specific AHS—and it unexpectedly found a successful route using sophisticated algebraic-number-theoretic ideas.

The present case probes a third, co-development route: through longitudinal Type-III development, the AI not only generated a Type-II hypothesis outside the initial AHS but also autonomously brought a Type-I problem into the shared APS. The AI's subsequent recursive Householder moving-frame construction was a Type-II discovery far beyond the existing AHS, while the human's recognition of its implications qualitatively broadened the scope and deepened the understanding of the AI-generated results with little additional computational effort. Thus, Type-III development can expose a KU whose solution remains a UU, allowing Type III itself to seed new Type-I/II/IV discoveries. The resulting proof synthesizes ordered propagation in 1D transfer constructions and Householder reflection in numerical linear algebra, reminiscent of recent AI-generated discoveries connecting established ideas or tools in unexpected settings~\cite{Gottweis_25_AI_Coscientist,9d_OpenAIUnitDistance_26}.

Taken together, these cases illustrate why autonomous hypothesis generation deserves separate attention from autonomous problem solving and how the two can become connected within a research program. Maximizing single-shot reasoning power and maximizing co-development efficiency are not necessarily the same optimization problem. 

\subsection{From limitations to testable hypotheses\label{sec:being}}

The four discovery types are not proposed as mechanisms inferred from the present case, but as a conceptualization of recurrent modes in the long history of human knowledge acquisition~\cite{Logan_09_Known}. In particular, the role of Type-III development as a cradle for other discovery types has been repeatedly exemplified across scientific inquiry: systematic exploration of an opened landscape can expose longstanding KUs, bring previously unrecognized UUs within reach, and reveal latent connections among existing ingredients. The present case provides a multidimensional human--AI realization of this broader pattern rather than evidence from which the pattern itself is inferred. 
A single case nevertheless cannot establish \emph{quantitative} incidence rates, a necessary sequence of dan levels, or a universal relationship between discovery type and contribution level.
 Testing the developmental hypothesis will require prospective studies across multiple domains, with complete human--AI interaction records, predefined criteria for provenance and novelty, adversarial validation, and comparison against workflows in which AI participation or intervention timing is systematically varied.

Yet, one particularly concrete and potentially testable question arose from this work: the effect of human-intervention timing (being or non-being) on the subsequent discovery trajectory. The present case does not establish that the human's delayed pivot after the AI's rediscovery of the bond-angle--dihedral-angle approach for $n=3$ caused the subsequent arbitrary-\(U,V\) discovery. Unlike most historical counterfactuals, however, the consequences of alternative intervention timing may be experimentally testable by replaying the trajectory with fresh AI instances deprived of subsequent discoveries. For example, without the intervening arbitrary-\(U,V\) result suggesting a geometric origin of the mapping, could an AI proceed directly from the \(n=3\) result—which provides no evident clue toward the Householder construction—to discover the recursive Householder moving-frame method? This suggests controlled comparisons of immediate and delayed human intervention, complementing ongoing research on intervention timing in human--AI workflows~\cite{Zhou_26_AI_When}. 
Such counterfactual trajectories can be represented as alternative paths through the four-axis AHS in Fig.~\ref{fig:AHS}, extending the role of representation from the physical problem to the discovery process itself.

\section{Summary\label{sec:summary}}

In collaboration with a general-purpose AI, we proved a theorem exactly mapping, at the Hamiltonian level, a zero-field range-two \(O(n)\) open chain with arbitrary inhomogeneous interaction functions \(U_i(\bm S_i\!\cdot\!\bm S_{i+1})\) and \(V_i(\bm S_i\!\cdot\!\bm S_{i+2})\) onto a range-one \(O(n)\) chain with NN interaction \(V_i(\boldsymbol\sigma_i\!\cdot\!\boldsymbol\sigma_{i+1})\) and axial single-spin potential \(U_i(\sigma_i^z)\), for every \(n\ge1\) and every system size $L\ge 1$. An analogous theorem for the \(q\)-state Potts model yields a closed-form exact solution of the \(J_1\)-\(J_2\) Potts open chain for every \(q\ge2\) and finite size.

Starting from a human-designed one-axis Type-III AHS for the homogeneous linear thermodynamic-limit specification, the AI autonomously generated hypotheses and microscopic proofs far beyond that AHS, culminating in a recursive Householder moving-frame proof for arbitrary interaction functions. The understanding of this proof in terms of reciprocity and the final extension to arbitrary inhomogeneity were formulated by the human, highlighting the critical role of human recognition and understanding of the implications of the AI-generated results. The trajectory demonstrates AI hypothesis generation outside the human collaborator's AHS and even APS, providing a proof of concept that sustained AI participation in Type-III development may incubate higher-autonomy Type-I/II/IV scientific breakthroughs. It further suggests that alternative intervention paths through the resulting four-axis AHS may make aspects of the discovery process itself experimentally testable.

\begin{acknowledgments}
The author is grateful for helpful discussions with Arthur Ramirez, Gang Cao, and Mark Dean on frustrated magnets, G. Baskaran on the $J_1$-$J_2$ Ising chain, Zvonko Glumac on the standard Potts chain, Alexei Tsvelik on the Heisenberg chain and the \emph{Daodejing}, Peggy Yin and Fang Dong on growth mindset, discovery types,  cognitive security, and the \emph{Daodejing}, Yabin Chen on effective strong-
vs-weak 
mapping, Kevin Wang on representation compatibility, and Felix Archampong, Jianming Bai, Yangang Liu, Alexandre Sitnikov, and Joseph Woicik on experimenting with an effective highly frustrated spin system. The author thanks Daniel Love and Aaron Wilkowitz for their assistance in exporting the human--AI conversation transcript to JSON files.  Brookhaven National Laboratory was supported by the U.S. Department of Energy (DOE), Office of Basic Energy Sciences (BES), Division of Materials Sciences and Engineering under Contract No. DE-SC0012704. 

The AI collaborator used the OpenAI GPT-5.6 family of general-purpose reasoning models. GPT-5.6 Sol in High mode was used for scientific discovery (4d, 6d, and 9d), and Extra High mode was used for critical manuscript review (2d), while GPT-5.6 Sol Pro was used for a single final round of adversarial manuscript review before submission (1d). The human author reviewed and verified all AI-generated scientific content, prepared all figures and tables, and wrote the manuscript with input from the AI. The human author assumes full responsibility for the manuscript.

Wolfram Mathematica 14.3 was used to design the initial Type-III AHS and verify the AI-generated results.
\end{acknowledgments}

\section*{Data availability}
The timestamped human--AI interaction records supporting the discovery-provenance analysis, together with other supporting data, are openly available~\cite{data:9d}.


\appendix
\section{Low-dimensional special cases and symmetry structure\label{appendix:n123}}

The general Householder construction becomes especially transparent for
$n=1,2,3$.  Besides providing direct checks of Theorem~\ref{thm:UiVi}, these cases
clarify the symmetry structure hidden in the microscopic mapping.  The three cases below show how
this general statement reduces to familiar low-dimensional structures.

\subsection{$n=1$: Ising endpoint\label{appendix:special_n1}}

For \(n=1\), the unit sphere is a two-value set $S^0=\{+1,-1\}$ and the \(n\times n\) matrices reduce to scalars. Therefore, the Householder reflector and the moving-frame matrix are simply elements of \(O(1)=\{+1,-1\}\).
With $\hat{e}=1$, the Householder reflector given by Eq.~(\ref{eq:householder_section}) is
\begin{equation*}
\mathsf H(\sigma_i)=\sigma_i.    
\end{equation*}
Here $\mathsf H(\hat{\bm e})=1$ is the conventional choice. Then, \ignore{
\begin{eqnarray*}
 \mathsf G_1=\mathsf H(S_1)=S_1, &
\sigma_1=\mathsf G_1^{\mathsf T}S_2=S_1S_2,\\
\mathsf G_2=\mathsf G_1\mathsf H(\sigma_1)=S_1S_1S_2=S_2, &
\sigma_2=\mathsf G_2^{\mathsf T}S_3=S_2S_3,\\
\mathsf G_3=\mathsf G_2\mathsf H(\sigma_2)=S_2S_2S_3=S_3, & \sigma_3=\mathsf G_3^{\mathsf T}S_4=S_3S_4, \\
\cdots, & \cdots \label{eq:n1_HG}
\end{eqnarray*}
so that} $S_i=\mathsf G_i$ and $\sigma_i=\mathsf G_i^{\mathsf T}S_{i+1}=S_iS_{i+1}$.
Hence
\begin{equation}
\sigma_i\sigma_{i+1}=(S_iS_{i+1})(S_{i+1}S_{i+2})=S_iS_{i+2},
 \label{eq:n1_identities}
\end{equation}
because $S_{i+1}^2=1$, as in the classical
bond-variable construction~\cite{Dobson_JMathP_69_Many-Neighbored-Ising-Chain}.
The utility of reflection is almost invisible.

\subsection{$n=2$: planar and clock chains\label{appendix:special_n2}}

Define the planar rotation $R(\alpha)$ and the reflection across the $x$ axis $D$,
\begin{equation}
 R(\alpha)=
 \begin{pmatrix}
  \cos\alpha&-\sin\alpha\\
  \sin\alpha& \cos\alpha
 \end{pmatrix},
 \qquad
 D=
\begin{pmatrix}1&0\\0&-1\end{pmatrix}.
 \label{eq:n2_RD}
\end{equation}
With $\hat{\bm e}=(1,0)^{\mathsf T}$---the reference unit vector conventionally chosen along the \(x\)-axis in two dimensions---the spin vectors are
\begin{eqnarray}
 \bm S_i&=&
 \begin{pmatrix}\cos\Theta_i\\ \sin\Theta_i\end{pmatrix}=R(\Theta_i)\hat{\bm e},
 \\
 \boldsymbol\sigma_i&=&
 \begin{pmatrix}\cos\phi_i\\ \sin\phi_i\end{pmatrix}=R(\phi_i)\hat{\bm e}.
 \label{eq:n2_spins}
\end{eqnarray}
The Householder reflector given by Eq.~(\ref{eq:householder_section}) is 
\begin{equation}
\mathsf H(\boldsymbol\sigma_i)
  =
 \begin{pmatrix}
  \cos\phi_i& \sin\phi_i\\
  \sin\phi_i&-\cos\phi_i
 \end{pmatrix}
 =R(\phi_i)D,
 \label{eq:n2_H}
\end{equation}
which indeed satisfies Eq.~(\ref{eq:householder_properties}):
$$\mathsf H(\boldsymbol\sigma_i)^{\mathsf T}=\mathsf H(\boldsymbol\sigma_i), \qquad\mathsf H(\boldsymbol\sigma_i)^2=\mathsf I,\qquad \mathsf H(\boldsymbol\sigma_i)\hat{\bm e}=\boldsymbol\sigma_i,$$ because $D^{\mathsf T}=D$, $D^2=\mathsf I$, $D\,\hat{\bm e}=\hat{\bm e}$, $R(\phi_i)^{\mathsf T}=R(-\phi_i)$, and $D R(-\phi_i)=R(\phi_i)D$. Here one may choose $\mathsf H(\hat{\bm e})=D$, instead of $\mathsf H(\hat{\bm e})=\mathsf I$, for  continuously approaching the singular point.

Then,
\ignore{\begin{align*}
 \mathsf G_1&=\mathsf H(\bm S_1)=R(\Theta_1)D, \\
\boldsymbol\sigma_1&=\mathsf G_1^{\mathsf T}\bm S_2=DR(-\Theta_1)\bm S_2=R(\Theta_1-\Theta_2)\hat{\bm e}=R(\phi_1)\hat{\bm e},\\
\mathsf G_2&=\mathsf G_1\mathsf H(\boldsymbol\sigma_1)=R(\Theta_1)D\;R(\Theta_1-\Theta_2)D=R(\Theta_2), \\
\boldsymbol\sigma_2&=\mathsf G_2^{\mathsf T}\bm S_3=R(\Theta_3-\Theta_2)\hat{\bm e}=R(\phi_2)\hat{\bm e},\\
\mathsf G_3&=\mathsf G_2\mathsf H(\boldsymbol\sigma_2)=R(\Theta_2)\;R(\Theta_3-\Theta_2)D=R(\Theta_3)D, \\
\boldsymbol\sigma_3&=\mathsf G_3^{\mathsf T}\bm S_4=DR(-\Theta_3)\bm S_4=R(\Theta_3-\Theta_4)\hat{\bm e}=R(\phi_3)\hat{\bm e}, \\
& \cdots \label{eq:n1_HG}
\end{align*}
Hence,} the recursive moving frame has the explicit form
\begin{equation}
 \mathsf G_i=R(\Theta_i)D^{i},
 \label{eq:n2_G}
\end{equation}
satisfying $\mathsf G_i\hat{\bm e}=R(\Theta_i)\hat{\bm e}=\bm S_i$, while 
\begin{align}
  \boldsymbol\sigma_i&=\mathsf G_i^{\mathsf T}\bm S_{i+1}=D^iR(-\Theta_i)R(\Theta_{i+1})\hat{\bm e}\nonumber \\
  &=R\Big((-1)^i(\Theta_{i+1}-\Theta_i)\Big)\hat{\bm e}.
  \label{eq:n2spin}
\end{align}
Matching Eqs.~(\ref{eq:n2_spins}) and (\ref{eq:n2spin})
leads to the bond-angle definition, 
\begin{equation}
 \phi_i=(-1)^{i}(\Theta_{i+1}-\Theta_i)
 \pmod{2\pi}.
 \label{eq:n2_staggered_bond}
\end{equation}
Thus the staggered bond angle is the explicit $n=2$ form of the alternating orientation of the recursive reflection frame.  Indeed,
\begin{align}
 \bm S_i\!\cdot\!\bm S_{i+1}
 &=\cos(\Theta_{i+1}-\Theta_i)=\cos\phi_i=\hat{\bm e}\cdot \boldsymbol\sigma_i,
 \label{eq:n2_nn}\\
 \bm S_i\!\cdot\!\bm S_{i+2}
 &=\cos(\Theta_{i+2}-\Theta_i) \nonumber\\
&=\cos(\phi_{i+1}-\phi_{i})=\boldsymbol\sigma_i\!\cdot\!\boldsymbol\sigma_{i+1}.
 \label{eq:n2_nnn}
\end{align}

For the planar model the full global symmetry of a dot-product Hamiltonian is
\begin{equation}
 O(2)=SO(2)\rtimes\mathbb Z_2
 \simeq U(1)\rtimes\mathbb Z_2.
 \label{eq:O2_symmetry}
\end{equation}
The bond angle is naturally coordinatized by the Abelian rotation subgroup $SO(2)\simeq U(1)$, whereas the transformation $\phi\mapsto-\phi$ is the reflection component.  Equations (\ref{eq:n2spin}) and
\eqref{eq:n2_staggered_bond} therefore show how reflection is built into the planar Hamiltonian-level mapping through alternating bond inversion. Once revealed, this is recognized as the low-dimensional precursor of the Householder reflection used for arbitrary $n$.

\textbf{For the $q$-state clock model}, which is a discrete $n = 2$ model with
$\Theta_i=2\pi k_i/q$ where $k_i\in\mathbb Z_q$, 
Eq.~\eqref{eq:n2_staggered_bond} closes within the same discrete set.  The full polygon symmetry is the dihedral group
\begin{equation}
 D_q=\mathbb Z_q\rtimes\mathbb Z_2,
 \label{eq:Dq_symmetry}
\end{equation}
where the $\mathbb Z_q$ factor gives cyclic rotations and the $\mathbb Z_2$ factor acts by inversion $k\mapsto-k$.  The clock mapping is therefore the immediate discrete restriction of the planar mapping, rather than a separate geometric construction.  In particular, the staggered bond variable uses precisely the inversion automorphism of $\mathbb Z_q$.

\subsection{$n=3$: Heisenberg chain\label{appendix:special_n3}}

Define the proper rotations about the $z$ and $y$ axes, respectively,
\begin{align}
 R_z(\alpha)&=
 \begin{pmatrix}
  \cos\alpha&-\sin\alpha&0\\
  \sin\alpha& \cos\alpha&0\\
  0&0&1
 \end{pmatrix},
 \label{eq:n3_Rz}\\
 R_y(\theta)&=
 \begin{pmatrix}
  \cos\theta&0&\sin\theta\\
  0&1&0\\
  -\sin\theta&0&\cos\theta
 \end{pmatrix},
 \label{eq:n3_Ry}
\end{align}
and the fixed reflection across the $yz$ plane,
\begin{equation}
 D=\operatorname{diag}(-1,1,1),
 \qquad \det D=-1.
 \label{eq:n3_D}
\end{equation}
With $\hat{\bm e}=(0,0,1)^{\mathsf T}$, the spin vectors in spherical coordinates are
\begin{eqnarray}
 \bm S_i=
 \begin{pmatrix}
  \sin\Theta_i\cos\Phi_i\\
  \sin\Theta_i\sin\Phi_i\\
  \cos\Theta_i
 \end{pmatrix}=R_z(\Phi_i)R_y(\Theta_i)\hat{\bm e}
 \\
 \boldsymbol\sigma_i=
 \begin{pmatrix}
  \sin\theta_i\cos\alpha_i\\
  \sin\theta_i\sin\alpha_i\\
  \cos\theta_i
 \end{pmatrix}=R_z(\alpha_i)R_y(\theta_i)\hat{\bm e},
 \label{eq:n3_sigma}
\end{eqnarray}

The Householder reflector given by Eq.~(\ref{eq:householder_section}) is
\begin{eqnarray}  
 \mathsf H(\boldsymbol\sigma_i)&=&
 \begin{pmatrix}
 s_{\alpha_i}^{2}-c_i c_{\alpha_i}^{2}
 &-(1+c_i)c_{\alpha_i}s_{\alpha_i}
 &s_i c_{\alpha_i}\\
 -(1+c_i)c_{\alpha_i}s_{\alpha_i}
 &c_{\alpha_i}^{2}-c_i s_{\alpha_i}^{2}
 &s_i s_{\alpha_i}\\
 s_i c_{\alpha_i}
 &s_i s_{\alpha_i}
 &c_i
 \end{pmatrix} \label{eq:n3_H_explicit} 
 \nonumber
\end{eqnarray}
where $c_i=\cos\theta_i$, $s_i=\sin\theta_i$,
$c_{\alpha_i}=\cos\alpha_i$, and
$s_{\alpha_i}=\sin\alpha_i$. 
It admits a particularly transparent factorization,
\begin{eqnarray}  
 \mathsf H(\boldsymbol\sigma_i)&=&
 R_z(\alpha_i)R_y(\theta_i)D R_z(-\alpha_i),
 \label{eq:n3_H_factorized}
\end{eqnarray}
which makes the symmetry content explicit: a local
Householder update is a pair of proper rotations surrounding a fixed reflection. Equivalently, it is the reflection $D$ carried to the appropriate local plane. 
For fixed $\Phi_i=\alpha_i= 0$, the $n=3$ problem reduces to the $n=2$ one in the $zx$ plane, and indeed Eq.~\eqref{eq:n3_H_factorized} reduces to Eq.~(\ref{eq:n2_H}) upon the $zx\to xy$ transformation. 

The two microscopic invariants (\ref{eq:geom_nn_identity2}) and (\ref{eq:geom_nnn_identity2}) now read
\begin{eqnarray}
\bm S_i\!\cdot\!\bm S_{i+1}
 &=&\hat{\bm e}\!\cdot\!\boldsymbol\sigma_i
 =\cos\theta_i.
 \label{eq:n3_nn2}\\
 \bm S_i\!\cdot\!\bm S_{i+2}
 &=&\boldsymbol\sigma_i\cdot\boldsymbol\sigma_{i+1}
 \nonumber\\
 &=&c_ic_{i+1}
   +s_is_{i+1}\cos(\alpha_{i+1}-\alpha_i),
 \label{eq:n3_nnn} 
\end{eqnarray}
reproducing Eqs.~(\ref{eq:XY_sigmaz})--(\ref{eq:XY_sigma}) after matching the bond- and dihedral-angles: $\vartheta_i=\theta_i$
, $\vartheta_{i+1}=\theta_{i+1}
$, and $\varphi=\alpha_{i+1}-\alpha_i$, up to the orientation/sign convention for the dihedral angle, which is irrelevant inside the cosine.


%

\end{document}